%% file: main_arXiv.tex
\documentclass[journal]{IEEEtran}

\usepackage{amsfonts}
\usepackage{amsthm} % For theorem, lemma, definition, proof
\newtheorem{proposition}{Proposition}
\usepackage{algorithm}
\usepackage{algorithmic}
\usepackage{stfloats}
\usepackage{subcaption}  
\usepackage{multirow}
\usepackage[font=small,skip=6pt]{caption}
 
\usepackage{diagbox}
\usepackage{makecell}
\usepackage{amsmath,amssymb}
\usepackage{graphicx} 
\usepackage{booktabs} 
\usepackage{xcolor}
\usepackage{hyperref}
\usepackage{pgfplots}
\usepgfplotslibrary{groupplots}
\usetikzlibrary{arrows.meta,positioning}
\pgfplotsset{compat=1.18}
\definecolor{zkBlue}{RGB}{0,0,255}
\definecolor{zkRed}{RGB}{255,0,0}
\newcommand{\revblue}[1]{#1}
\newcommand{\revred}[1]{#1}
\pgfplotsset{
    zkamsAxis/.style={
        axis background/.style={fill=white},
        axis lines=left,
        axis line style={draw=black!70, -{Stealth[length=3pt,width=3pt]}},
        tick style={draw=black!70},
        ticklabel style={font=\small, color=black},
        label style={font=\small, color=black},
        title style={font=\small\bfseries, color=black},
        legend style={font=\small, draw=black!40, fill=white, text=black},
        grid=both,
        major grid style={draw=black!20, line width=0.4pt},
        minor grid style={draw=black!10, line width=0.2pt},
        minor tick num=1,
    },
    zkamsInsetAxis/.style={
        axis background/.style={fill=white},
        axis lines=left,
        axis line style={draw=black!70, -{Stealth[length=2.5pt,width=2.5pt]}},
        tick style={draw=black!70},
        ticklabel style={font=\scriptsize, color=black},
        label style={font=\scriptsize, color=black},
        grid=both,
        major grid style={draw=black!20, line width=0.4pt},
        minor grid style={draw=black!10, line width=0.2pt},
        minor tick num=1,
    },
    zkBlueLine/.style={color=zkBlue, very thick, mark size=2.3pt},
    zkRedLine/.style={color=zkRed, very thick, mark size=2.3pt},
}

\newcommand{\full}{\ensuremath{\bullet}} % 鈼?fully supported (math bullet, stable)
\newcommand{\none}{\ensuremath{\circ}}   % 鈼?not supported

\DeclareRobustCommand{\partSup}{%
\raisebox{0.03ex}{%
\begin{tikzpicture}[scale=0.055,baseline=-0.6ex]
	  \draw (0,0) circle (1);
	  \fill (0,0) -- (1,0) arc (0:180:1) -- cycle; % left half filled
\end{tikzpicture}}}

\usepackage{tabularx}
\usepackage{booktabs} % 鎺ㄨ崘鐢ㄤ簬鐢绘洿涓撲笟鐨勬í绾?
\usepackage{ragged2e} % 鎻愪緵鏇村ソ鐨勫乏瀵归綈鏁堟灉 \RaggedRight
\newcolumntype{L}[1]{>{\hsize=#1\hsize\RaggedRight\arraybackslash}X}
\usepackage{placeins} % 鐢ㄤ簬 \FloatBarrier
\usepackage[font=small,skip=0pt]{caption} % 璋冩暣Caption闂磋窛

\begin{document}
%
% paper title
% Titles are generally capitalized except for words such as a, an, and, as,
% at, but, by, for, in, nor, of, on, or, the, to and up, which are usually
% not capitalized unless they are the first or last word of the title.
% Linebreaks \\ can be used within to get better formatting as desired.
% Do not put math or special symbols in the title.

\title{ZK-AMS: Credibly Anonymous Admission for Web 3.0 Platforms via Recursive Proof Aggregation}
%
%
% author names and IEEE memberships
% note positions of commas and nonbreaking spaces ( ~ ) LaTeX will not break
% a structure at a ~ so this keeps an author's name from being broken across
% two lines.
% use \thanks{} to gain access to the first footnote area
% a separate \thanks must be used for each paragraph as LaTeX2e's \thanks
% was not built to handle multiple paragraphs
%

\author{Zibin~Lin,~Taotao~Wang,~Shengli~Zhang,~Long~Shi,~Boris~D{\"u}dder,~and~Shui~Yu   % <-this % stops a space

\thanks{Z. Lin, T. Wang and S. Zhang are with the College of Electronics and Information Engineering, Shenzhen University, Shenzhen 518052, China (e-mail: linaacc9595@gmail.com; ttwang@szu.edu.cn; zsl@szu.edu.cn)}% <-this % stops a space

\thanks{L. Shi is with the School of Electronic and Optical
Engineering, Nanjing University of Science and Technology, Nanjing 210094,
China (e-mail: longshi@njust.edu.cn)}% <-this % stops a space

\thanks{B. D{\"u}dder is with the Department of Computer Science, University of Copenhagen, Copenhagen, Denmark (e-mail: boris.d@di.ku.dk)}% <-this % stops a space

\thanks{Shui Yu is with the School of Computer Science, University of Technology
Sydney, Sydney, NSW 2007, Australia (e-mail: shui.yu@uts.edu.au)}% <-this % stops a space

\thanks{{\itshape Corresponding author: Taotao Wang (e-mail: ttwang@szu.edu.cn).}}
}

% The paper headers
\markboth{ZK-AMS: Credibly Anonymous Admission for Web 3.0 Platforms via Recursive Proof Aggregation}%
{ZK-AMS: Credibly Anonymous Admission for Web 3.0 Platforms via Recursive Proof Aggregation}

% make the title area
\maketitle

\begin{abstract}
Web 3.0 platforms need an onboarding mechanism that can admit real users at scale without forcing them to reveal identity documents or pay one on-chain verification cost per user. Existing approaches typically rely on KYC-style disclosure, per-request on-chain verification, or trusted batching, making onboarding cost and latency difficult to predict under bursty demand. We present \textbf{ZK-AMS}, a credibly anonymous admission infrastructure that maps Personhood Credentials to anonymous on-chain Soul Accounts. Rather than introducing a new primitive, ZK-AMS composes zero-knowledge credential validation, permissionless batch submission, recursive proof aggregation, and anonymous post-admission account provisioning into one end-to-end workflow. Its key design feature is a confidential batching pipeline in which admission instances of a common relation are folded off-chain under multi-key homomorphic encryption, allowing an untrusted batch submitter to coordinate aggregation without direct access to individual user witnesses during batching; the confidentiality scope is characterized explicitly in the security analysis. The resulting batch is settled on-chain with constant verification cost per batch rather than per admitted user. We implement ZK-AMS on an Ethereum testbed and evaluate admission throughput, end-to-end latency, gas consumption, and parameter trade-offs. Results show stable batch-verification gas across evaluated batch sizes, substantially lower amortized on-chain cost than the non-recursive baseline, and practical cost-latency trade-offs for high-concurrency onboarding in Web 3.0 platforms.
\end{abstract}

\begin{IEEEkeywords}
privacy-preserving admission, decentralized identity, zero-knowledge proofs, recursive proof aggregation, multi-key homomorphic encryption, anonymous account provisioning.
\end{IEEEkeywords}

\IEEEpeerreviewmaketitle

\section{Introduction}

\IEEEPARstart{W}{eb 3.0} platforms are increasingly deployed as open, programmable computing environments, including DeFi systems, DAO governance, creator economies, and decentralized social applications~\cite{cao2025web}. In these platforms, user onboarding is not a peripheral feature but a platform-level primitive: the system must admit legitimate users, reject Sybil identities, and provision usable on-chain accounts under bursty and adversarial demand~\cite{wang2023account,liu2025detecting}. This requirement becomes particularly challenging when a platform seeks \emph{credibly anonymous} admission, namely, allowing users to remain anonymous at the account level while still giving the platform a verifiable basis for uniqueness and policy enforcement.

A practical Web 3.0 admission infrastructure should satisfy three requirements simultaneously: verifiable uniqueness, user anonymity, and predictable cost/latency under concurrent onboarding. Existing approaches typically satisfy only part of this design space. Soulbound-token-style identity signals improve accountability but do not provide protocol-level uniqueness guarantees~\cite{weyl2022decentralized}. KYC-style admission restores uniqueness but breaks privacy and introduces centralized bottlenecks~\cite{PiperArxiv2025Permissioning}. Zero-knowledge credential verification improves privacy, yet many designs still rely on per-request on-chain proof validation, causing admission cost to grow with the number of joiners~\cite{LauingerICBC2024Portal,RosenbergSP2023zkcreds,PiperArxiv2025Permissioning}. More broadly, batching often reduces settlement overhead only by assuming a trusted coordinator or exposing intermediate witness material, while identity-oriented schemes usually stop at credential verification and do not address anonymous post-admission account provisioning.

Beyond this high-level tension, Web 3.0 admission introduces three concrete obstacles. Prior work on decentralized computation, privacy protection, anonymous endorsement, and privacy-preserving blockchain processing suggests that decentralization, privacy, and public verifiability must often be engineered jointly rather than optimized in isolation~\cite{MendisTETC2021,RasheedTETC2022,MazumdarTETC2021}. In our setting, this means that admission requests must share a common verification relation for sound batch processing, a permissionless batch coordinator should aggregate requests without direct access to user witnesses during batching, and admission must be followed by anonymous but policy-compliant account provisioning. Together, these requirements make Web 3.0 admission a systems problem rather than a standalone credential-verification problem.

%Beyond this high-level tension, Web 3.0 admission introduces three concrete technical obstacles. First, admission requests must be expressed in a \emph{common verification relation} so that they can be batch-processed without sacrificing soundness. Second, if batching is delegated to a permissionless coordinator, the coordinator should be able to aggregate requests without learning user witnesses. Third, batch compression alone is insufficient for platform onboarding: after admission is settled, the system must still provision an operational on-chain account while enforcing one-to-one use of the admitted identity. These requirements together make Web 3.0 admission a systems problem rather than a standalone credential-verification problem.

This paper presents \textbf{ZK-AMS} (\textbf{Z}ero-\textbf{K}nowledge \textbf{A}dmission \textbf{M}apping \textbf{S}ystem), a platform-level admission infrastructure that maps Personhood Credentials~\cite{adler2024personhood} to anonymous on-chain accounts while preserving verifiable uniqueness. The core challenge is to achieve four properties simultaneously: batching that limits witness exposure against an untrusted Permissionless Batch Submitter (PBS) under explicit prototype assumptions, constant-cost on-chain settlement per batch, verifiable uniqueness rooted in real-world personhood credentials, and anonymous yet enforceable account provisioning. ZK-AMS addresses these challenges by combining zero-knowledge credential validation, Nova-style recursive proof aggregation~\cite{nova}, BGV-style multi-key homomorphic encryption~\cite{LopezAlt2012MultikeyFHEMPC,BrakerskiGentryVaikuntanathan2012}, and linkable ring signatures~\cite{noether2016ring} into a single end-to-end workflow.

%A preliminary version of part of this work appeared as zkBID in WWW'25~\cite{zkbid}. That paper focused on single-admission identity-to-account binding via PHC-based zero-knowledge verification and MLSAGS-based anonymous Soul-Account certification, together with prototype measurements of Groth16 and MLSAGS costs. The present journal version retains that binding mechanism but extends it to a scalable Web~3.0 admission infrastructure by adding common-relation recursive aggregation, permissionless encrypted off-chain batching under an untrusted PBS, batch-oriented proof finalization and constant-cost settlement, and expanded system/security analysis with end-to-end evaluation. The WWW'25 version did not include permissionless batching, encrypted witness folding, batch-level settlement, or the present end-to-end system analysis; thus, this manuscript should be read as a system-level extension rather than a simple experimental expansion.

A preliminary version of part of this work appeared as zkBID in WWW'25~\cite{zkbid}, focusing on single-admission identity-to-account binding via PHC-based Groth16 zero-knowledge verification and linkable ring signature based anonymous Soul-Account certification, together with prototype measurements of Groth16 and linkable ring signature  costs. The present journal version retains that binding mechanism but extends it to a scalable Web~3.0 admission infrastructure by adding common-relation recursive aggregation, permissionless encrypted off-chain batching under an untrusted PBS, batch-oriented proof finalization and constant-cost settlement, and expanded system/security analysis with end-to-end evaluation. Since the conference version did not include permissionless batching, encrypted witness folding, batch-level settlement, or the present end-to-end system analysis, this journal version should be understood as a system-level extension rather than a simple experimental expansion.

The main technical contributions of this paper are summarized as follows:
\begin{itemize}
	\item[1)] \textbf{A common-relation admission formulation for scalable Web 3.0 onboarding:}
	We formulate Web 3.0 admission as a committed relaxed rank-1 constraint system (R1CS) relation, namely a Nova-compatible relaxed R1CS admission relation whose witness-bearing vectors are represented to the batching layer through public commitments, thereby binding Personhood Credential validity, holder ownership, and public admission inputs into homogeneous admission instances suitable for recursive aggregation.
	
	\item[2)] \textbf{A recursive batching workflow with explicitly scoped confidentiality under an untrusted PBS:}
	We design a permissionless off-chain batching mechanism in which admission instances of the same relation are recursively folded over MKHE ciphertexts, while transcript-bound Fiat--Shamir challenges and public commitments preserve batch consistency without giving the batch coordinator direct access to individual user witnesses during batching; the precise confidentiality scope is stated in Section~\ref{securityAnalysis}.
	
	\item[3)] \textbf{A batch-settled admission and anonymous provisioning architecture:}
	We couple recursive batch settlement with a post-admission account-provisioning stage based on linkable ring signatures, so that the platform enforces one-to-one use of admitted identities while preserving account-level anonymity.
	
	\item[4)] \textbf{An implementation-driven evaluation of cost, latency, and deployment trade-offs:}
	We implement ZK-AMS on an Ethereum testbed and evaluate proof-generation cost, batch-settlement gas, end-to-end onboarding latency, and confidential-folding overhead, showing substantially improved amortized admission efficiency over the non-recursive baseline.
\end{itemize}

The remainder of this paper is organized as follows. Section~\ref{background} reviews the technical building blocks. Section~\ref{system_model} presents the system architecture and operational model. Section~\ref{detaildesign} details the ZK-AMS workflows. Section~\ref{securityAnalysis} analyzes security and operational risks. Section~\ref{experiment} presents the implementation and experimental evaluation. Section~\ref{rw} discusses related work, and Section~\ref{concl} concludes the paper.

% =================================================================================
% SECTION II: Technical Building Blocks
% =================================================================================
\section{Technical Building Blocks}\label{background}

ZK-AMS combines four standard components: PHCs as the off-chain uniqueness root, Nova-style recursive folding for batch compression, MKHE for witness-protected off-chain batching under the stated prototype assumptions, and MLSAGS for anonymous post-admission account provisioning. This section states only the architectural role of each component; formal definitions and protocol specifics are deferred to Appendices~\ref{supp:nova}, \ref{supp:mkhe}, and \ref{supp:lrs}.

\subsection{Personhood Credentials (PHCs)}
A Personhood Credential (PHC)~\cite{adler2024personhood} certifies that its holder corresponds to a unique human. In ZK-AMS, it provides the off-chain uniqueness root: the user proves in zero knowledge that the PHC is valid and correctly bound to the public admission input, without exposing identity attributes on-chain.

\subsection{Recursive Proof Aggregation via Nova Folding}
Nova-style folding~\cite{nova} compresses many admission instances of the same relation into one accumulated instance, so chain-side verification is paid once per batch rather than once per user. ZK-AMS uses this mechanism as the main scalability lever for high-concurrency onboarding; Appendix~\ref{supp:nova} gives the extended folding equations used in the system.

\subsection{Off-Chain Folding with Explicitly Scoped Confidentiality via Multi-Key Homomorphic Encryption}
Recursive aggregation alone is insufficient if the batch coordinator can inspect witness data. ZK-AMS therefore performs the folding-relevant computations on BGV-style MKHE ciphertexts~\cite{LopezAlt2012MultikeyFHEMPC,BrakerskiGentryVaikuntanathan2012}, so the PBS coordinates batching over ciphertexts and public commitments rather than individual plaintext witnesses. Only the randomized folded batch state needed for proof finalization is collaboratively revealed at batch end, and the exact confidentiality scope is stated in Section~\ref{securityAnalysis} and Appendix~\ref{supp:mkhe}.

\subsection{Anonymous Account Provisioning via Linkable Ring Signatures}
After batch admission, an admitted user still needs an operational account. ZK-AMS uses MLSAGS~\cite{noether2016ring}, together with the broader line of ring-signature techniques~\cite{rivest2001leak,beullens2020calamari}, to let a user bind a fresh Soul Account to the admitted seed identity anonymously while a deterministic key image prevents repeated provisioning attempts. Appendix~\ref{supp:lrs} summarizes the MLSAGS interface and the key-image mechanism used in this provisioning step.

\subsection{Architectural Role of the Building Blocks}
In ZK-AMS, PHCs provide the off-chain uniqueness root, Nova compresses homogeneous admission batches, MKHE limits witness exposure during batching under the stated prototype assumptions, and MLSAGS enables anonymous but enforceable post-admission provisioning. The contribution of ZK-AMS is therefore the system integration of these components into a Web~3.0 admission workflow with predictable batch settlement, rather than a new cryptographic primitive.

% =================================================================================
% SECTION III: System Architecture and Operational Model
% =================================================================================
\section{System Architecture and Operational Model}\label{system_model}

ZK-AMS (Zero-Knowledge Admission Mapping System) is a platform-level onboarding infrastructure for Web 3.0 environments that maps a real-world uniqueness credential to a credibly anonymous on-chain account without per-user on-chain verification. The provisioned account is termed a \emph{Soul Account}. Fig.~\ref{fig:zkams_overview} summarizes the end-to-end architecture, spanning a client-side credential layer, an off-chain batching layer with explicitly scoped confidentiality, and an on-chain settlement and provisioning layer.

\begin{figure*}[!t]
    \centering
    \includegraphics[width=0.85\textwidth]{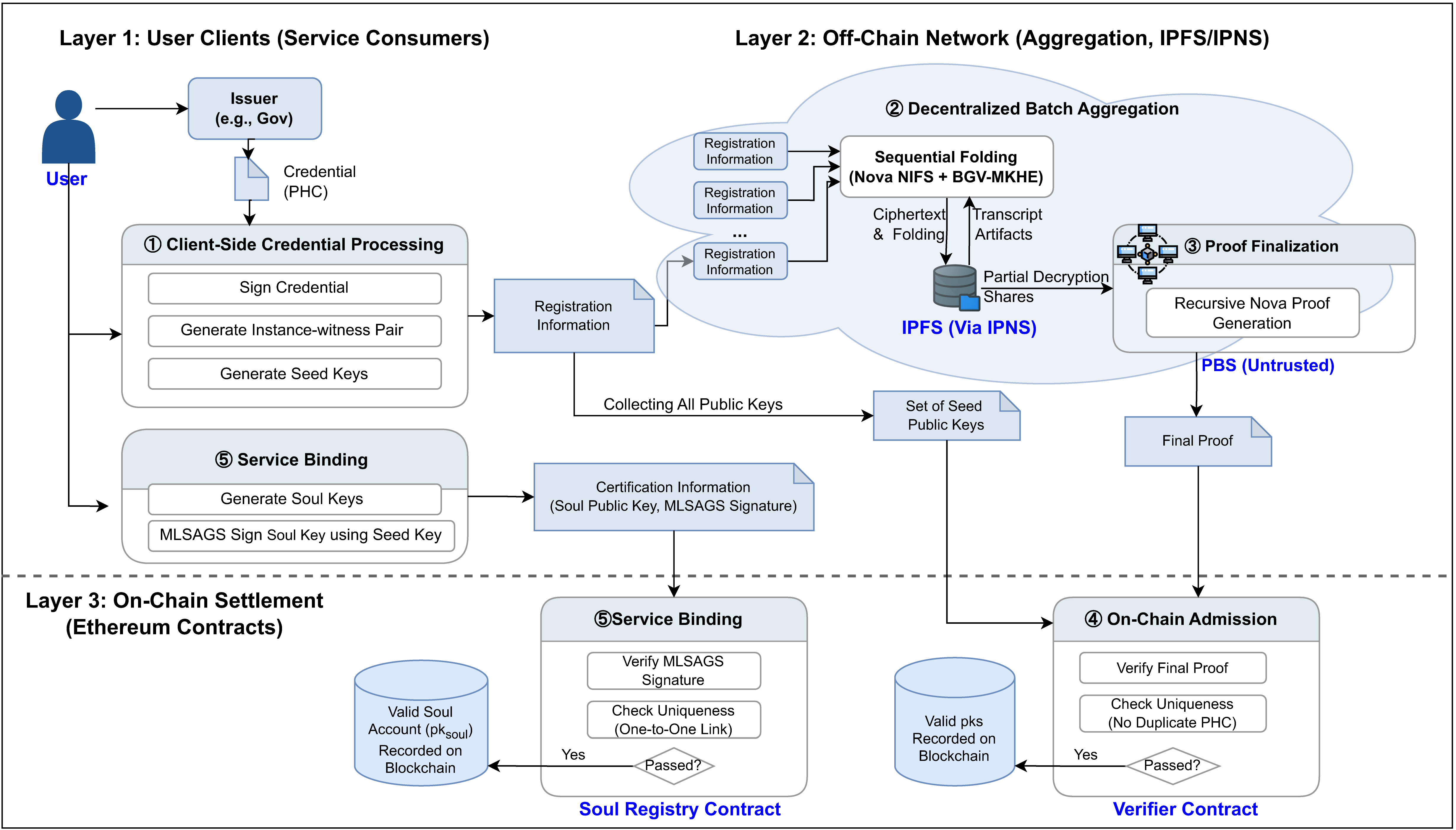}
    \caption{ZK-AMS platform architecture and admission workflow:
    (1) client-side credential processing;
    (2) confidential off-chain batch aggregation over IPFS/IPNS;
    (3) PBS proof finalization;
    (4) on-chain batch admission; and
    (5) anonymous account provisioning via the Soul Registry.}
    \label{fig:zkams_overview}
    \vspace{-0.4cm}
\end{figure*}

\subsection{Architectural Roles and Data Planes}
ZK-AMS separates an off-chain data/compute plane from an on-chain settlement plane. The main roles are:
\begin{itemize}
    \item \textbf{Users:}
    Each user holds a PHC, transforms it into a privacy-preserving admission credential, participates in encrypted off-chain batching, and later provisions a Soul Account after successful admission.

    \item \textbf{Permissionless Batch Submitter (PBS):}
    Any node may act as the PBS by collecting the public off-chain transcript, obtaining the folded admission material, generating the succinct proof, and submitting the batch to the blockchain. The PBS is not trusted with individual user witnesses.

    \item \textbf{On-chain contracts:}
    A Verifier Contract validates the batch proof and records admission anchors, such as PHC hashes and seed public keys, to prevent duplicate registration. A Soul Registry Contract verifies anonymous account binding and enforces one-to-one provisioning through key-image uniqueness.

    \item \textbf{Off-chain transcript layer:}
    Intermediate batching artifacts, including encrypted folding state, public commitments, and decryption shares, are stored on IPFS~\cite{ipfs_docs} and discovered via IPNS~\cite{ipfs_ipns_docs}. Only compact settlement anchors are posted on-chain.

\end{itemize}

This separation keeps cryptographic and batching complexity off-chain while reserving public verifiability and policy enforcement for the chain. As a result, ZK-AMS treats admission as a scalable computing workflow rather than a per-user smart-contract verification task.

\subsection{Admission Relation and Batch Semantics}
All admissions in a batch correspond to the \emph{same} admission relation. Let $\mathcal{R}_{adm}$ denote the committed relaxed R1CS relation capturing PHC validity, holder ownership, and public-input binding. Each user locally transforms a registration request into an instance--witness pair $(\mathcal{I}_i,\mathcal{W}_i)\in \mathcal{R}_{adm}$, where the public instance contains commitments and public admission inputs, and the witness contains the corresponding openings and private credential material.

The batching process is therefore \emph{homogeneous}: it aggregates many instances of the same admission relation rather than heterogeneous statements. This is required because recursive folding preserves consistency only when all folded instances belong to a common relation. Operationally, a batch is a set of admission requests sharing the same circuit, public system parameters, and verification semantics.

\subsection{Admission Lifecycle}
ZK-AMS executes the following five-phase admission lifecycle, detailed in Section~\ref{detaildesign}.

\begin{enumerate}
    \item \textbf{Client-side credential processing:}
    The user converts a PHC into an admission credential in committed instance--witness form and generates a seed key pair.

    \item \textbf{Confidential off-chain batch aggregation:}
    Users upload encrypted folding artifacts to the off-chain transcript layer and collaboratively update the batch accumulator without revealing individual witnesses.

    \item \textbf{Proof finalization:}
    The PBS reconstructs the folded batch material required for final proof generation and produces a single succinct batch proof.

    \item \textbf{On-chain batch admission:}
    The PBS submits the batch proof and compressed batch input list to the Verifier Contract, which settles the batch and records the admission anchors.

    \item \textbf{Anonymous account provisioning:}
    Each admitted user binds a fresh Soul Account to the admitted seed identity through a linkable ring signature, and the Soul Registry Contract enforces one-to-one provisioning.
\end{enumerate}

This lifecycle turns onboarding into a reusable batch workflow: uniqueness is checked once, settlement is amortized, and account provisioning remains anonymous but enforceable.

\subsection{Cost- and Capacity-Predictable Admission Model}
From a computing-systems perspective, the main objective of ZK-AMS is not only privacy and uniqueness, but also \emph{operator-visible predictability} in settlement cost and onboarding latency. In non-recursive designs, admitting $N$ users requires verifying $N$ independent proofs on-chain, so the total settlement cost grows linearly:
\begin{equation}\label{equ:usercost1}
    C_{total}(N) = N \times (C_{verify} + C_{store}),
\end{equation}
where $C_{verify}$ is the gas cost of verifying one proof and $C_{store}$ captures per-user state updates. This linear behavior makes bursty onboarding events difficult to budget and provision.

ZK-AMS instead settles $N$ admissions using a single batch proof. The amortized chain-side cost per admitted user becomes
\begin{equation}\label{equ:usercost2}
    C_{user}(N) = \frac{C_{verifyFolded} + C_{overhead}}{N} + C_{stateUpdate}.
\end{equation}
where $C_{verifyFolded}$ is the cost of verifying one folded batch proof, $C_{overhead}$ captures batch-level submission overhead, and $C_{stateUpdate}$ is the per-user admission-state update cost. As $N$ increases, the fixed verification cost is shared across more users, and the amortized settlement cost approaches the minimum cost of a simple state update.

Admission latency exhibits the opposite trade-off. A larger batch reduces amortized on-chain cost, but also increases waiting, aggregation, and proving delay. At a high level, the end-to-end batch latency can be decomposed as
\begin{equation}\label{equ:batch_latency}
    T_{batch}(N) = T_{wait}(N) + T_{agg}(N) + T_{prove}(N) + T_{chain},
\end{equation}
where $T_{wait}(N)$ is the time needed to accumulate a batch, $T_{agg}(N)$ is the off-chain aggregation time, $T_{prove}(N)$ is the PBS proof-finalization time, and $T_{chain}$ is the on-chain settlement delay. Equations~\eqref{equ:usercost2} and \eqref{equ:batch_latency} together provide a practical configuration interface for platform operators: larger batches reduce per-user settlement cost, while smaller batches reduce onboarding delay.

This operational model distinguishes ZK-AMS from per-request admission schemes. Instead of forcing operators to pay and provision linearly with the number of joiners, ZK-AMS exposes an explicit batch-size knob for balancing throughput, latency, and cost under high-concurrency onboarding.

% =================================================================================
% SECTION IV: ZK-AMS Admission and Provisioning Workflows
% =================================================================================
\section{ZK-AMS Admission and Provisioning Workflows}\label{detaildesign}

This section specifies the end-to-end admission and provisioning workflow of ZK-AMS. Following Section~\ref{system_model}, the workflow consists of five phases: client-side credential processing, confidential off-chain batch aggregation, proof finalization, on-chain batch admission, and anonymous account provisioning. Together, these phases map a real-world uniqueness credential to an operational Soul Account.

All batched requests correspond to the same admission relation $\mathcal{R}_{adm}$. Hence, the off-chain aggregation phase folds a homogeneous sequence of committed relaxed R1CS instances rather than heterogeneous statements, which is essential to the soundness and operational simplicity of batch settlement.

% ---------------------------------------------------------------------------------
% Phase I
% ---------------------------------------------------------------------------------
\subsection{Phase I: Client-Side Credential Processing}\label{sec:register_info_generation}

This phase converts a Personhood Credential (PHC) into a machine-verifiable admission instance while keeping privacy-sensitive processing on the user side. The PHC contains the holder public key, the issuer public key, and the issuer signature, denoted by $\mathsf{pk_{holder}}$, $\mathsf{pk_{issuer}}$, and $\mathsf{sig_{issuer}}$, respectively.

The user forms a witness--public-input pair for the admission relation, where $W=\mathsf{PHC}$ and $x=\{\mathsf{hash_{PHC}}, \mathsf{sig_{holder}}, \mathsf{pk_{issuer}}\}$. Here, $\mathsf{hash_{PHC}}$ is the credential hash, and $\mathsf{sig_{holder}}$ is the holder signature on $\mathsf{hash_{PHC}}$ under the secret key corresponding to $\mathsf{pk_{holder}}$. The admission relation enforces the following three conditions:
\begin{equation}
\begin{aligned}
    \text{(i)} ~  & x.\mathsf{hash_{PHC}} = \mathsf{Hash}(\mathsf{Canon}(W)) \\
    \text{(ii)} ~ & 1 = \mathsf{SigVerify}(x.\mathsf{pk_{issuer}}, x.\mathsf{hash_{PHC}}, W.\mathsf{sig_{issuer}}) \\
    \text{(iii)}~ & 1 = \mathsf{SigVerify}(W.\mathsf{pk_{holder}}, x.\mathsf{hash_{PHC}}, x.\mathsf{sig_{holder}})
\end{aligned}
\end{equation}
where $\mathsf{Canon}(\cdot)$ denotes a deterministic canonicalization procedure and $\mathsf{SigVerify}(\cdot)$ is the signature-verification function. These checks respectively enforce credential integrity, issuer-certified validity, and holder ownership.

To make the admission request compatible with recursive aggregation, the user transforms $(x,W)$ into a committed relaxed R1CS instance--witness pair $(\mathcal{I},\mathcal{W})$ of the common relation $\mathcal{R}_{adm}$. Concretely, the relaxed R1CS vectors that would normally remain explicit are represented to the batching layer through public commitments, while their openings remain inside the witness. ZK-AMS uses an additively homomorphic linear commitment~\cite{Baum18}, compatible with BGV-style evaluation over the plaintext ring $R_t$. Let $\mathsf{pp_{commit}}=(G,H)$, where $G\in R_t^{k\times \ell}$ and $H\in R_t^{k\times m}$ are public matrices over $R_t$. For a message vector $M\in R_t^{m}$ and randomness vector $r\in R_t^{\ell}$, the commitment is defined as\footnotemark
\begin{equation}\label{equ:new_commit}
 \mathsf{Com}(\mathsf{pp_{commit}}, M, r) := G\cdot r + H\cdot M \in R_t^{k}.
\end{equation}
The matrices $G$ and $H$ are generated once from public randomness and thereafter treated as system parameters.
\footnotetext{Since messages from Nova/R1CS are over $\mathbb{F}_p$ but BGV and this commitment compute over $R_t$, we choose $t$ with $p\mid t$ (e.g., $t=p$) and embed $\mathbb{F}_p$ into $R_t$ by the natural canonical lift. This technique of Field-to-Ring Embedding ensures homomorphic plaintext operations in $R_t$ implement the same arithmetic as in $\mathbb{F}_p$~\cite{SV14}.}  

\revblue{This commitment instantiation does not introduce an additional trusted setup beyond public parameter generation. Under standard lattice assumptions over $R_t$ (e.g., Module-SIS/LWE-type assumptions), it provides computational binding, while hiding follows from fresh uniform randomness $r$. Its additive homomorphism preserves linear combinations, which is exactly the algebraic property needed to evaluate commitments and folding updates through BGV homomorphic operations on ciphertexts.}

\revblue{After sampling commitment randomness $(r_E,r_W)$, the user sets $E\gets\mathbf{0}$ and $u\gets1$ to embed a standard R1CS instance into the relaxed form, computes $\overline{E}$ and $\overline{W}$, and thereby obtains the committed relaxed R1CS pair $\mathcal{I}=(\overline{E},u,\overline{W},x)$ and $\mathcal{W}=(E,r_E,W,r_W)$.}

The resulting committed public instance is $\mathcal{I} = (\overline{E}, u, \overline{W}, x)$, and the corresponding witness is $\mathcal{W} = (E, r_E, W, r_W)$. Here and throughout the remainder of the paper, overlined symbols denote public commitments, while the corresponding unbarred symbols denote the underlying plaintext quantities; for example, $\overline{E}=\mathsf{Com}(\mathsf{pp}^{E}_{\mathsf{commit}}, E, r_E)$. The user then generates a seed key pair $(\mathsf{pk_{seed}}, \mathsf{sk_{seed}})$ for later account provisioning and forms the registration credential $\mathsf{resCredential}=(\mathsf{hash_{PHC}}, \mathsf{pk_{seed}}, \mathcal{I}, \mathcal{W})$. Algorithm~\ref{alg:R1CSinstance} summarizes the procedure.

\begin{algorithm}[!t]
\caption{Client-Side Instance Generation}
\label{alg:R1CSinstance}
\begin{algorithmic}[1]
\STATE \textbf{Input:} \ensuremath{W=\{\mathsf{PHC}\}}, \ensuremath{x=\{\mathsf{hash_{PHC}}, \mathsf{sig_{holder}},\mathsf{pk_{issuer}}\}}
\STATE \ensuremath{r_E\gets \mathsf{random}()}
\STATE \ensuremath{r_W \gets \mathsf{random}()}
\STATE \ensuremath{E \gets \mathbf{0}}
\STATE \ensuremath{u \gets 1}
\STATE \ensuremath{\overline{E} \gets \mathsf{Com}(\mathsf{pp}^{E}_{\mathsf{commit}},E,r_E)}
\STATE \ensuremath{\overline{W} \gets \mathsf{Com}(\mathsf{pp}^{W}_{\mathsf{commit}},W,r_W)}
\STATE \ensuremath{\mathcal{I} \gets (\overline{E}, u, \overline{W}, x)}
\STATE \ensuremath{\mathcal{W} \gets (E, r_E, W, r_W)}
\STATE \textbf{Return:} \ensuremath{(\mathcal{I},\mathcal{W})}
\end{algorithmic}
\end{algorithm}

% ---------------------------------------------------------------------------------
% Phase II
% ---------------------------------------------------------------------------------
\subsection{Phase II: Decentralized Batch Aggregation}\label{sec:decentralized_folding}

This phase implements the off-chain batch aggregation layer of ZK-AMS. A batch of $N$ admission instances of the same relation $\mathcal{R}_{adm}$ is folded into one accumulated statement and later settled with constant on-chain verification cost per batch. The folding pipeline runs over MKHE ciphertexts so that participating users update the encrypted accumulator state, while the PBS reads the public transcript and performs only batch finalization after folding completes. The workflow proceeds in three stages:

\subsubsection{Stage 1 (Initialization and Encryption)}
The system adopts globally agreed BGV MKHE parameters. Each user generates a key pair $(\mathsf{pk}_i, \mathsf{sk}_i)$ and publishes the public key through the off-chain transcript layer. The user then encrypts the folding-relevant variables in its local committed relaxed R1CS pair under $\mathsf{pk}_i$. Specifically, the encrypted plaintext variables are $g\in\{x_i,u_i,E_i,W_i,r_{E_i},r_{W_i}\}$, yielding ciphertexts $\{{c}_{g}\}_{g \in \{x_i,u_i,E_i,W_i,r_{E_i},r_{W_i}\}}$. Together with the public commitments $\overline{E}_i$ and $\overline{W}_i$, these encrypted artifacts are uploaded to the off-chain transcript layer.

\subsubsection{Stage 2 (Homomorphic State Transition)}
The core confidential batching component of ZK-AMS is the encrypted folding layer, which updates the batch accumulator state by applying Nova-style folding directly over ciphertexts. Appendices~\ref{supp:nova} and \ref{supp:mkhe} summarize the Nova and MKHE mechanics used in this stage.

 \textbf{Initial user ($h_1$):} \revblue{The first participant initializes the accumulator by posting the encrypted state and public commitments. Concretely, for each $g\in\{x_1,u_1,E_1,W_1,r_{E_1},r_{W_1}\}$, the accumulator ciphertext is initialized as $\hat{c}_{g_{acc}} \gets {c}_{g}$, and the public commitments are initialized as $(\overline{E}_{acc},\overline{W}_{acc}) \gets (\overline{E}_1,\overline{W}_1)$.}

 \textbf{Subsequent user ($h_i$):} A later participant retrieves the current accumulator from the public transcript and performs one homomorphic folding step:
    \begin{enumerate}
        \item[a)] \textbf{Cross-term processing:} Let $Z=(W,x,u)$ denote the witness/public-input vector used in the relaxed R1CS formulation, and denote the encrypted accumulated and incoming vectors by $\hat{c}_{Z_{acc}}=(\hat{c}_{W_{acc}},\hat{c}_{x_{acc}},\hat{c}_{u_{acc}})$ and ${c}_{Z_i}=({c}_{W_i},{c}_{x_i},{c}_{u_i})$. For compactness, we denote by $\hat{c}_{AZ_{acc}}$, $\hat{c}_{BZ_{acc}}$, $\hat{c}_{CZ_{acc}}$, $\hat{c}_{AZ_i}$, $\hat{c}_{BZ_i}$, and $\hat{c}_{CZ_i}$ the ciphertexts obtained by homomorphically evaluating the corresponding linear maps under the common matrices $(A,B,C)$. In the following equations, $\oplus$ denotes ciphertext-domain homomorphic addition, $\otimes$ denotes ciphertext-domain homomorphic multiplication, and $a\odot c$ denotes multiplication of ciphertext $c$ by a public scalar $a$ in the plaintext ring. The user first computes these linear images and then evaluates the Nova-style cross-term ciphertext:
\begin{equation}
\begin{aligned}
\hat{c}_T
&=
\hat{c}_{AZ_{acc}} \otimes \hat{c}_{BZ_i}
\oplus
\hat{c}_{AZ_i} \otimes \hat{c}_{BZ_{acc}}
\oplus  \\
&
(-1)\odot \hat{c}_{u_{acc}} \otimes\hat{c}_{CZ_i}
\oplus
(-1)\odot {c}_{u_i} \otimes \hat{c}_{CZ_{acc}}.
\end{aligned}
\end{equation}

        \item[b)] \textbf{Homomorphic commitment to the cross-term:} To avoid revealing the plaintext cross-term before batch materialization, the participant computes a public commitment to it directly in the encrypted domain. After sampling commitment randomness $r_T$ and encrypting it as ${c}_{r_T}$, the encrypted commitment is evaluated as
\begin{equation}
\hat{c}_{\overline{T}}
=
\mathsf{Com}(\mathsf{pp}^{T}_{\mathsf{commit}}, \hat{c}_T, {c}_{r_T})
=
G_T\cdot {c}_{r_T} \oplus H_T\cdot \hat{c}_T.
\end{equation}
        Users collaboratively decrypt only $\hat{c}_{\overline{T}}$ to obtain the public commitment value $\overline{T}$, which becomes part of the transcript. The plaintext cross-term itself remains hidden.

        \item[c)] \textbf{Fiat--Shamir challenge generation:} \revblue{The folding challenge is derived from the public transcript via the Fiat--Shamir transform. Let $\mathsf{ctx}$ denote a domain-separation string including the protocol identifier, the current transcript or batch identifier, and a digest of $(\mathsf{vk_{NIFS}}, X)$, where $\mathsf{vk_{NIFS}}$ is the public verification key for the Nova-style folding interface and $X = \{ {\mathsf{hash_{PHC}}}_i, {\mathsf{pk_{seed}}}_i \}_{i=1}^N$ is the compressed batch input list later submitted at settlement.} A Fiat--Shamir challenge is then derived from the public transcript:
\begin{equation}\label{equ:FS-challenge}
\begin{aligned}
v := &  H(\mathsf{ctx},
       \overline{E}_{acc}, \overline{W}_{acc},
       \overline{E}_i,    \overline{W}_i,
       \overline{T}, i).
\end{aligned}
\end{equation}
        \revblue{This challenge binds the folding step and the settled batch metadata to the same public transcript. Hence, any party can recompute $v$, and any post hoc rewrite of the transcript or the ordered batch input list $X$ invalidates the final proof.}

        \item[d)] \textbf{Homomorphic folding of encrypted variables:} Using $v$, the participant updates the encrypted accumulated state:
\begin{equation}
\begin{aligned}
{\hat{c}_{x_{new}} = \hat{c}_{x_{acc}} \oplus (v \odot {c}_{x_i})}\\
{\hat{c}_{u_{new}} = \hat{c}_{u_{acc}} \oplus (v \odot {c}_{u_i})}
\end{aligned}
\end{equation}
and
\begin{equation}
\begin{aligned}
\hat{c}_{E_{new}} &= \hat{c}_{E_{acc}} \oplus (v \odot \hat{c}_T) \oplus (v^2 \odot {c}_{E_i}), \\
\hat{c}_{r_{E_{new}}} &= \hat{c}_{r_{E_{acc}}} \oplus (v \odot c_{r_T}) \oplus (v^2 \odot {c}_{r_{E_i}}), \\
\hat{c}_{W_{new}} &= \hat{c}_{W_{acc}} \oplus (v \odot {c}_{W_i}), \\
\hat{c}_{r_{W_{new}}} &= \hat{c}_{r_{W_{acc}}} \oplus (v \odot {c}_{r_{W_i}}).
\end{aligned}
\end{equation}

        \item[e)] \textbf{Public commitment folding:} The corresponding public commitments are folded consistently:
\begin{equation}
\begin{aligned}
\overline{E}_{new} &= \overline{E}_{acc} + v \cdot \overline{T} + v^2 \cdot \overline{E}_i, \\
\overline{W}_{new} &= \overline{W}_{acc} + v \cdot \overline{W}_i.
\end{aligned}
\end{equation}
        
        \item[f)] \textbf{State update:} \revblue{The new encrypted state and commitments are then written back to the transcript. For each $g\in\{x,u,E,r_E,W,r_W\}$, set $\hat{c}_{g_{acc}} \gets \hat{c}_{g_{new}}$, and update the public commitments as $(\overline{E}_{acc},\overline{W}_{acc})\gets(\overline{E}_{new},\overline{W}_{new})$.}
    \end{enumerate}

\subsubsection{Stage 3 (Distributed Decryption and Batch Materialization)}
After the last folding step, users collaboratively reconstruct only the folded plaintext components required for proof finalization. Appendix~\ref{supp:mkhe} gives the MKHE decryption interfaces and the prototype-specific share-release assumptions used here.
\begin{itemize}
    \item \textbf{Partial decryption:} Let $\hat{c}_{acc}= \{\hat{c}_{x}, \hat{c}_{E}, \hat{c}_{r_E}, \hat{c}_{W}, \hat{c}_{r_W}, \hat{c}_{u}\}$ denote the final accumulator ciphertexts. Each user computes partial decryption shares $\mu_{i, g} \leftarrow \mathsf{MKHE.PartialDecrypt}(\mathsf{sk}_i, \hat{c}_{g})$ for each relevant component $g$ and publishes them to the transcript layer.
    \item \textbf{Share fusion:} The PBS or any other eligible node combines the published shares to reconstruct $x_f,u_f,E_f,r_{E,f},W_f,r_{W,f}$. In the current prototype, the evaluated accumulator depends on the full batch key set, so this step uses the full set of published shares for the finalized batch rather than a separate tunable $t$-out-of-$N$ threshold-decryption policy.
    \item \textbf{Batch materialization:} The accumulated public instance becomes $\mathcal{I}_{acc,N}=(\overline{E}_{acc},u_f,\overline{W}_{acc},x_f)$, which serves as the public batch anchor for proof finalization. Because the folded components are randomized linear combinations, they do not reveal individual user witnesses.
\end{itemize}

% ---------------------------------------------------------------------------------
% Phase III
% ---------------------------------------------------------------------------------
\subsection{Phase III: Proof Finalization and Submission}\label{sec:recursive_proof_generation}

This phase converts the materialized folded batch into a single succinct settlement proof and prepares the corresponding batch-submission material. Once a batch reaches the configured size $N$ or a timeout triggers finalization, the PBS completes proof finalization against the accumulated admission relation. This phase proceeds through the following stages.

\subsubsection{Finalization Trigger and Objective}
Once a batch reaches size $N$ or a timeout triggers finalization, the PBS generates a succinct proof that the accumulated batch is folding-consistent and satisfiable under the admission relation. \revblue{The objective of this phase is to certify both that the final public folding step is valid and that the resulting committed relaxed R1CS instance still satisfies the same admission relation enforced in Phase~I.}

\subsubsection{Accumulator Reconstruction}
After distributed decryption, the PBS receives $(x_f,u_f,E_f,r_{E,f},W_f,r_{W,f})$, forms the accumulated witness $\mathcal{W}_{acc,N}=(E_f,r_{E,f},W_f,r_{W,f})$, and reconstructs the public instance $\mathcal{I}_{acc,N}=(\overline{E}_{acc},u_f,\overline{W}_{acc},x_f)$ from the transcript commitments. \revblue{In addition, it reads the public transcript metadata needed to recompute the Fiat--Shamir challenge for the finalization step.}

\subsubsection{Padding Fold and Proof Statement}
To finalize the recursive chain, ZK-AMS performs one publicly defined padding fold using a fixed padding instance--witness pair $(\mathcal{I}_{pad},\mathcal{W}_{pad})$. This step preserves the admission relation while producing a terminal folded instance suitable for succinct proof generation. \revblue{Since the PBS already holds the folded plaintext components needed for the last cross-term evaluation, it computes the padding cross-term $T_{N+1}$ in the clear using $(x_f,u_f,W_f)$ and the fixed padding values $(x_{pad},u_{pad},W_{pad})$, and then commits to it to obtain $\overline{T}_{N+1}$.}

\revblue{We denote by $\mathcal{C}_1$ the committed relaxed R1CS checker for the admission relation. On input $(\mathcal{I}=(\overline{E},u,\overline{W},x),\mathcal{W}=(E,r_E,W,r_W))$, $\mathcal{C}_1$ checks (i) commitment-opening consistency, namely $\overline{E}=\mathsf{Com}(\mathsf{pp}^{E}_{\mathsf{commit}},E,r_E)$ and $\overline{W}=\mathsf{Com}(\mathsf{pp}^{W}_{\mathsf{commit}},W,r_W)$, and (ii) relaxed R1CS satisfiability $(AZ)\circ(BZ)=u\cdot(CZ)+E$ for $Z=(W,x,u)$, where the encoded constraints are exactly the PHC integrity, issuer-validity, and holder-ownership checks from Phase~I.}

\revblue{We then design a finalization circuit $\mathcal{C}_2$ that proves the materialized batch is both transcript-consistent and satisfiable. Specifically, $\mathcal{C}_2$ (i) recomputes $\mathcal{I}_{acc,N+1}=\mathsf{NIFS.Verify}(\mathsf{vk_{NIFS}},\mathcal{I}_{acc,N},\mathcal{I}_{pad},\overline{T}_{N+1})$ to enforce consistency of the final padding fold, (ii) runs $\mathcal{C}_1$ on $(\mathcal{I}_{acc,N+1},\mathcal{W}_{acc,N+1})$, and (iii) enforces that $\mathcal{I}_{pad}$ equals the publicly specified padding instance so that the PBS cannot substitute an arbitrary padding input. The resulting proof therefore certifies that the submitted batch anchor corresponds to a well-formed finalized fold and a satisfiable accumulated admission statement. Appendix~\ref{supp:nova} summarizes the Nova-side background for this finalization step.}

\subsubsection{Proof-Generation Interface}
For the fixed proof circuit $\mathcal{C}_2$, the proving and verification keys are generated once during system setup and thereafter treated as public system parameters for the PBS and the on-chain verifier. Algorithm~\ref{alg:zkpgeneration} summarizes proof finalization.

\begin{algorithm}[!t]
\caption{Recursive Proof Finalization}
\label{alg:zkpgeneration}
\begin{algorithmic}[1]
\STATE \textbf{Input:} \ensuremath{\mathcal{I}_{acc,N}, \overline{T}_{N+1}, \mathcal{I}_{pad}, \mathcal{I}_{acc,N+1}, \mathcal{W}_{acc,N+1}, \mathsf{pk}}
\STATE \textbf{Public parameters:} verification key \ensuremath{\mathsf{vk}} for circuit \ensuremath{\mathcal{C}_2}, generated once during system setup
\STATE \textbf{Generate proof:}
\STATE \quad \ensuremath{\mathsf{Input_{pub}} \gets (\mathcal{I}_{acc,N},\overline{T}_{N+1},\mathcal{I}_{acc,N+1})}
\STATE \quad \ensuremath{\mathsf{Input_{pri}} \gets (\mathcal{I}_{pad},\mathcal{W}_{acc,N+1})}
\STATE \quad \ensuremath{\pi_{zk} \gets \mathsf{zkSNARK.PROVE}(\mathsf{pk}, \mathsf{Input_{pub}}, \mathsf{Input_{pri}})}
\STATE \textbf{Return:} \ensuremath{\pi_{zk}}
\end{algorithmic}
\end{algorithm}

Operationally, this phase bridges the off-chain batching layer and the on-chain settlement layer without giving the PBS direct access to individual user witnesses during batching.

% ---------------------------------------------------------------------------------
% Phase IV
% ---------------------------------------------------------------------------------
\subsection{Phase IV: On-Chain Batch Admission}\label{sec:user_identity_auth}

This phase settles one admission batch on-chain using a single succinct proof and records the resulting admission anchors for later account provisioning. It proceeds through batch submission, verifier-side checks, and contract execution as described below.

\subsubsection{Batch Submission}
The PBS constructs the compressed batch input list $X = \{ {\mathsf{hash_{PHC}}}_i, {\mathsf{pk_{seed}}}_i \}_{i=1}^N$ from the admitted users' registration credentials. \revblue{The same ordered batch metadata is already bound into the Fiat--Shamir context during Phase~II, so the submitted $X$ must match the transcript-consistent batch certified by $\pi_{zk}$.} The PBS then submits $(\pi_{zk}, X, \mathcal{I}_{acc,N}, \overline{T}_{N+1}, \mathcal{I}_{acc,N+1})$ to the Verifier Contract.

\subsubsection{Verifier-Side Checks}
The contract first verifies the succinct proof and then checks that the PHC hashes in $X$ have not been recorded previously. If both checks succeed, it stores the PHC hashes and seed public keys as admission anchors, thereby creating a publicly verifiable record of admitted seed identities while preventing duplicate registration without revealing the underlying PHCs.

Algorithm~\ref{alg:uiasc} summarizes the on-chain admission logic.

\begin{algorithm}[!t]
\caption{Verifier Contract}
\label{alg:uiasc}
\begin{algorithmic}[1]
\STATE \textbf{Input:} \ensuremath{\mathcal{I}_{acc,N}, \overline{T}_{N+1},\pi_{zk}, X, \mathcal{I}_{acc,N+1}}
\STATE \ensuremath{\mathsf{stmt} \gets (\mathcal{I}_{acc,N}, \overline{T}_{N+1},\mathcal{I}_{acc,N+1})}
\STATE \ensuremath{validation \gets \mathsf{zkSNARK.VERIFY}(\mathsf{vk}, \mathsf{stmt}, \pi_{zk})}
\IF{\ensuremath{validation \neq 1}}
	    \STATE \textbf{return} ``Verification failed''
\ENDIF
\STATE \ensuremath{(\mathsf{hashes_{PHC}},\mathsf{pks_{seed}}) \gets X}
\STATE \ensuremath{stored \gets \mathsf{Traverse}(\mathsf{hashes_{PHC}})}
\IF{\ensuremath{stored = 1}}
    \STATE \textbf{return} ``Already registered''
\ENDIF
\STATE \text{Store}(\ensuremath{\mathsf{hashes_{PHC}}}, \ensuremath{\mathsf{pks_{seed}}})
\STATE \textbf{return} ``Verification succeeded and admission allowed''
\end{algorithmic}
\end{algorithm}

% ---------------------------------------------------------------------------------
% Phase V
% ---------------------------------------------------------------------------------
\subsection{Phase V: Anonymous Account Provisioning}\label{sec:soul_account_creation}

This final phase activates an admitted user's operational on-chain identity, termed a \emph{Soul Account} in ZK-AMS, after admission has been settled. It proceeds through user-side provisioning, registry-side verification, and one-to-one account activation.

\subsubsection{Provisioning Objective}
Admission and account provisioning are intentionally separated: a user is first admitted through batch settlement and then binds a fresh Soul Account anonymously.

\subsubsection{User-Side Provisioning}
\revblue{To provision a Soul Account, the user generates a new account key pair $\left( {\mathsf{addr_{soul}}},{\mathsf{pk_{soul}}},{\mathsf{sk_{soul}}} \right)$ and selects a ring $L$ of admitted seed public keys from the on-chain registry, including its own $\mathsf{pk_{seed}}$. It then computes the MLSAGS key image $y_0 \leftarrow \mathsf{sk_{seed}} \cdot \mathsf{H_p}(\mathsf{pk_{seed}})$ and signs the message $\mathsf{addr_{soul}}$ using the admitted seed key as $\sigma \leftarrow \mathsf{LRS.Sign}(1^k, 1^n, \mathsf{addr_{soul}}, L, \mathsf{sk_{seed}})$.}
\revblue{The resulting signature can be written as $\sigma=(y_0,\ldots)$, so the key image acts as a one-time provisioning tag while the signer remains hidden inside the ring. The user then submits $(\sigma, L, \mathsf{addr_{soul}})$ to the Soul Registry Contract. Appendix~\ref{supp:lrs} summarizes the MLSAGS interface and key-image mechanism used in this phase.}

\subsubsection{Registry-Side Verification}
The contract checks that all public keys in $L$ are valid admitted seed public keys, verifies the linkable ring signature on $\mathsf{addr_{soul}}$, and rejects previously used key images. \revblue{Because MLSAGS linkability requires repeated signatures under the same admitted seed secret key to induce the same key image, this uniqueness check enforces that one admitted seed identity can activate at most one Soul Account without revealing which ring member performed the provisioning.} If all checks pass, it records $(y_0,\mathsf{addr_{soul}})$ and finalizes provisioning. Algorithm~\ref{alg:usacsc} summarizes the contract logic.

\begin{algorithm}[!t]
\caption{Soul Registry Contract}
\label{alg:usacsc}
\begin{algorithmic}[1]
\STATE \textbf{Input:} \ensuremath{\sigma, L, \mathsf{addr_{soul}}}
\STATE \ensuremath{\mathsf{stored} \gets \mathsf{Traverse}(L)}
\IF{\ensuremath{\mathsf{stored} = 0}}
\STATE \textbf{return} ``The ring of public keys is invalid''
\ENDIF
\STATE \ensuremath{\mathsf{validation} \gets \mathsf{LRS.Verify}(1^k, 1^n, \mathsf{addr_{soul}}, L, \sigma)}
\IF{\ensuremath{\mathsf{validation} \neq 1}}
\STATE \textbf{return} ``Verification failed''
\ENDIF
\STATE \ensuremath{y_0 \gets \revblue{\mathsf{KeyImage}(\sigma)}}
\STATE \ensuremath{\mathsf{stored} \gets \mathsf{Traverse}(y_0)}
\IF{\ensuremath{\mathsf{stored} \neq 0}}
\STATE \textbf{return} ``Already provisioned''
\ENDIF
\STATE \text{Store}(\ensuremath{y_0}, \ensuremath{\mathsf{addr_{soul}}})
\STATE \textbf{return} ``Provisioning succeeded''
\end{algorithmic}
\end{algorithm}

Thus, ZK-AMS supports account-level anonymity while still enforcing one admitted seed identity to at most one operational Soul Account.

%=====================================================================
\section{Security and Operational Risk Analysis}\label{securityAnalysis}
%=====================================================================

ZK-AMS combines a confidential off-chain batching plane with a public on-chain settlement plane. Because its contribution is a principled systems integration rather than a new primitive, we analyze security at the \emph{system} level: the propositions below state which guarantee is obtained from which primitive or trust assumption, while primitive-level proofs remain those of the cited constructions. Longer proof sketches and explicit assumption-to-guarantee mappings for G1--G6 are deferred to Appendix~\ref{supp:security}. Table~\ref{tab:threat_model} summarizes the main threats, mitigations, and assumptions.

% --- Table start ---
\begin{table*}[!t]
	\caption{System-level threat model of ZK-AMS}
	\label{tab:threat_model}
	\centering
	\renewcommand{\arraystretch}{1.25}
	
	\begin{tabularx}{\textwidth}{
			L{0.6}
			L{1.6}
			L{0.95}
			L{0.6}
		}
		\toprule
		\textbf{Threat} & \textbf{Mitigation (Mechanism)} & \textbf{Assumption} & \textbf{Enforced / Verified at} \\
		\midrule
		
		\multicolumn{4}{l}{\textit{\textbf{Category I: Admission integrity \& uniqueness enforcement}}} \\
		Forged admission &
		The admission relation checks PHC integrity, issuer signature validity, and holder-key ownership; batch settlement succeeds only if the succinct proof verifies. &
		Signature unforgeability; zkSNARK soundness. &
		On-chain: Verifier Contract \\
		\addlinespace[2pt]
		
		Duplicate admission (credential reuse) &
		The Verifier Contract stores and checks the uniqueness of $\mathsf{hash_{PHC}}$; the proof binds each admitted batch entry to the submitted credential hash. &
		Hash collision resistance; issuer-side uniqueness of PHCs. &
		On-chain: Verifier Contract \\
		\addlinespace[2pt]
		
		Double provisioning &
		The Soul Registry Contract records the key image $y_0$ and rejects repeated use of the same admitted seed identity for multiple accounts. &
		MLSAGS linkability (unique key image per seed key). &
		On-chain: Soul Registry Contract \\
		\addlinespace[2pt]
		
		\midrule
		\multicolumn{4}{l}{\textit{\textbf{Category II: Privacy \& confidentiality preservation}}} \\
		Deanonymization / account linkage &
		MLSAGS signer ambiguity hides which admitted seed identity provisioned the Soul Account, while the key image reveals only duplicate-use information. &
		MLSAGS anonymity and linkability. &
		Publicly verifiable cryptographic guarantee \\
		\addlinespace[2pt]
		
		Witness leakage to PBS &
		Folding is evaluated over MKHE ciphertexts; the PBS sees only ciphertexts, commitments, and public transcript state. Final decryption reconstructs only a randomized folded batch output, which hides individual witnesses unless the PBS coalition leaves only one non-colluding participant in the batch. &
		MKHE semantic security before share release; at least two batch participants remain non-colluding with the PBS in a finalized batch. &
		Off-chain confidential batching layer \\
		\addlinespace[2pt]
		
		\midrule
		\multicolumn{4}{l}{\textit{\textbf{Category III: Operational resilience \& availability}}} \\
		PBS tampering / transcript inconsistency &
		Fiat--Shamir binds each folding step to the public transcript; a transcript-inconsistent batch causes final proof verification failure. &
		Fiat--Shamir heuristic (RO model); Nova/NIFS consistency. &
		Off-chain transcript + On-chain Verifier Contract \\
		\addlinespace[2pt]
		
		Faulty or missing decryption shares &
		In the current prototype, a batch cannot be finalized without the required share set; malformed or missing shares therefore cause batch stalling or exclusion and re-queuing under an off-chain policy. Stronger handling would require a verifiable share-validation or accountable-abort layer. &
		Availability of a sufficient share set; no publicly verifiable share-correctness or identifiable-abort layer in the current prototype. &
		Off-chain coordination policy \\
		\addlinespace[2pt]
		
		PBS delay / censorship / batch stalling &
		The PBS role is permissionless; after timeout, another node may finalize from the same public transcript, and affected requests can be re-queued without invalidating prior published state. &
		At least one live PBS; eventual transcript availability. &
		Operational policy layer \\
		\addlinespace[2pt]
		
		Off-chain data unavailability &
		Intermediate artifacts are stored on IPFS/IPNS and can be replicated by multiple nodes; only compact admission anchors are required on-chain for safety. &
		Eventual availability of off-chain storage. &
		Off-chain transcript layer \\
		\bottomrule
	\end{tabularx}
	
\end{table*}
% --- Table end ---

\subsection{System Assumptions and Security Goals}

\subsubsection{Assumptions}
Our analysis relies on the following assumptions.

\begin{itemize}
	\item \textbf{A1) Credential trust root:}
	Recognized issuers generate valid PHCs, and each valid PHC corresponds to at most one real-world person.
	
	\item \textbf{A2) Primitive security:}
	The adopted primitives satisfy their standard properties: issuer/holder signature unforgeability, collision-resistant hashing, zkSNARK soundness and zero knowledge, Nova/NIFS consistency in the Fiat--Shamir (random-oracle) setting, MLSAGS correctness, unforgeability, anonymity, and linkability, and BGV-style MKHE semantic security.
	
	\item \textbf{A3) Blockchain correctness:}
	The blockchain executes smart contracts faithfully and preserves the integrity of admitted state once transactions are confirmed.
	
	\item \textbf{A4) Eventual transcript availability:}
	The off-chain transcript layer (IPFS/IPNS) eventually makes published objects retrievable, although temporary delay may occur.

	\item \textbf{A5) Confidentiality coalition bound in the current prototype:}
	The current MKHE batch-materialization step uses all-party share release over the finalized batch key set and does not implement a separate tunable $t$-out-of-$N$ threshold-decryption layer. Accordingly, individual-witness confidentiality is claimed only when at least two participants remain outside the PBS coalition. If the PBS colludes with all but one finalized participant, the coalition may isolate that participant's folded plaintext contribution from the revealed accumulator after share release; this is weaker than direct recovery of the participant's entire original witness, but it is already outside the confidentiality claim of the current prototype.
\end{itemize}

\subsubsection{Security goals}
ZK-AMS aims to provide the following system-level guarantees:

\begin{itemize}
	\item \textbf{G1) Admission integrity:} only users with valid PHCs and corresponding holder keys can be admitted.
	\item \textbf{G2) Duplicate-admission resistance:} the same PHC cannot be admitted multiple times.
	\item \textbf{G3) Credibly anonymous provisioning:} an admitted user can provision an anonymous Soul Account without revealing which admitted seed identity they own.
	\item \textbf{G4) Single-account enforcement:} one admitted seed identity can provision at most one active Soul Account.
	\item \textbf{G5) Witness confidentiality during batching:} the PBS should not learn any individual user witness from the off-chain batching workflow.
	\item \textbf{G6) Batch-consistency and settlement integrity:} a tampered or transcript-inconsistent batch should fail on-chain verification.
\end{itemize}

\subsubsection{Adversarial model}
We consider PPT adversaries including malicious users attempting forged or repeated admission, a malicious PBS attempting tampering, delay, or selective suppression, malicious participants withholding or corrupting decryption shares, and external observers trying to link admitted credentials to provisioned accounts. The PBS is not trusted. Correctness claims allow arbitrary collusion subject to A1--A4. Confidentiality adds the coalition bound in A5: if the PBS colludes with all but one participant in a finalized batch, the remaining participant's folded plaintext contribution to the revealed accumulator may be isolated after share release, so the paper does not claim confidentiality in that case.

\subsection{Admission Integrity and Provisioning Correctness}

The following propositions formalize G1--G4 as system-level consequences of A1--A3.

\begin{proposition}[Admission integrity]
Under Assumptions A1--A3, if a PPT adversary causes the Verifier Contract to accept an admission request for which it does not know a valid PHC and the corresponding holder key, then one can construct a PPT adversary that breaks either issuer/holder signature unforgeability or zkSNARK soundness with non-negligible probability.
\end{proposition}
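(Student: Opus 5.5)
We argue by a standard reduction with case analysis on how the accepted batch could have arisen. Suppose a PPT adversary $\mathcal{A}$ makes the Verifier Contract accept a submission $(\pi_{zk}, X, \mathcal{I}_{acc,N}, \overline{T}_{N+1}, \mathcal{I}_{acc,N+1})$ containing some entry $({\mathsf{hash_{PHC}}}_j, {\mathsf{pk_{seed}}}_j)$ for which $\mathcal{A}$ knows no valid PHC and matching holder key. By A3, acceptance means $\mathsf{zkSNARK.VERIFY}(\mathsf{vk},\mathsf{stmt},\pi_{zk})=1$ was actually computed on the posted statement.

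\textbf{Step 1 (extraction of the accumulated witness).} We apply zkSNARK knowledge soundness for $\mathcal{C}_2$: either a knowledge extractor outputs $(\mathcal{I}_{pad},\mathcal{W}_{acc,N+1})$ satisfying $\mathcal{C}_2$, or we obtain an adversary breaking zkSNARK soundness. Since $\mathcal{C}_2$ forces $\mathcal{I}_{pad}$ to be the public padding instance and recomputes $\mathcal{I}_{acc,N+1}$ via $\mathsf{NIFS.Verify}$, we obtain a satisfying committed relaxed R1CS witness for the terminal fold.

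\textbf{Step 2 (unfolding to per-instance witnesses).} This is the main obstacle, since $\mathcal{C}_2$ certifies only the accumulated instance. We would invoke Nova/NIFS knowledge soundness in the Fiat--Shamir random-oracle setting (A2). Rewinding each folding step on the challenge $v$ from \eqref{equ:FS-challenge} gives three accepting transcripts with distinct $v$. Interpolating the degree-two relations, $E_{new}=E_{acc}+vT+v^2E_i$ together with the linear updates of $W,x,u$, recovers satisfying witnesses for $\mathcal{I}_{acc}$ and $\mathcal{I}_i$ individually, provided commitment binding holds. Otherwise the recovered openings yield a binding break, which we fold into the zkSNARK/NIFS soundness bucket. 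Because $\mathsf{ctx}$ hashes $X$, the extracted instance $\mathcal{I}_j$ has public input $x_j$ tied to ${\mathsf{hash_{PHC}}}_j$; any mismatch would require a random-oracle collision. Iterating over the $N$ steps (tree/sequential extraction, polynomial for fixed $N$) yields a witness $W_j$ with $(\mathcal{I}_j,\mathcal{W}_j)\in\mathcal{R}_{adm}$.

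\textbf{Step 3 (case split on the extracted PHC).} By conditions (i)--(iii), $W_j$ is a PHC with $x_j.\mathsf{hash_{PHC}}=\mathsf{Hash}(\mathsf{Canon}(W_j))$ and a valid issuer signature under a recognized $\mathsf{pk_{issuer}}$, and $\mathsf{sig_{holder}}$ verifies under $W_j.\mathsf{pk_{holder}}$.

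\emph{Case (a): no honest issuer ever signed $\mathsf{hash_{PHC}}$.} The extracted $\mathsf{sig_{issuer}}$ is then a forgery, and the reduction embeds the challenge issuer key and uses it to break EUF-CMA.

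\emph{Case (b): the PHC was honestly issued to some holder, but $\mathcal{A}$ lacks the holder key.} The reduction embeds the challenge key as that holder's $\mathsf{pk_{holder}}$ (guessing the index, with polynomial loss). The verifying $\mathsf{sig_{holder}}$ on $\mathsf{hash_{PHC}}$ is then a holder forgery, unless the holder had previously signed it. We exclude that benign case by defining ``knowing the holder key'' relative to signing-oracle access, and hash collisions that would let a different PHC share the same hash are absorbed via A2.

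\emph{Case (c): $\mathcal{A}$ holds a valid PHC and the holder key.} This contradicts the hypothesis.

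Summing the success probabilities across the cases shows that a non-negligible acceptance probability implies a non-negligible advantage against signature unforgeability or zkSNARK (including NIFS) soundness.
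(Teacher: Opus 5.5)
\textbf{The gap is in Step~2, and it cannot be closed for the construction as specified.}

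Your Steps~1 and~3 have the same shape as the paper's sketch, which is a two-way split. Either $\pi_{zk}$ is a false accepting proof (a zkSNARK soundness break), or the accepted statement satisfies the embedded PHC checks, so fabricating it requires an issuer or holder forgery. The paper never unfolds anything. It treats an accepting $\pi_{zk}$ as if it certified (i)--(iii) for each entry of $X$. You are right that this is not what $\mathcal{C}_2$ proves, but your Step~2 does not bridge the gap, for four reasons.

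\emph{No fold to fork.} The verified statement is only $\mathsf{stmt}=(\mathcal{I}_{acc,N},\overline{T}_{N+1},\mathcal{I}_{acc,N+1})$. $\mathcal{C}_2$ checks only the final padding fold and satisfiability of the terminal instance. The $N-1$ intermediate folds and their challenges $v$ are never checked by the contract or inside any circuit. An adversary can output $\mathcal{I}_{acc,N}$ without ever querying $H$ at those points, so rewinding $v$ extracts nothing.

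\emph{Relaxed satisfiability is vacuous.} Where NIFS extraction does apply (the padding fold), it yields only a committed \emph{relaxed} witness. The incoming $u_i,E_i$ are folded as arbitrary values, and nothing forces $u_i=1$, $E_i=\mathbf{0}$, since they are encrypted or hidden under commitments. Any $(W,x,u)$ satisfies $(AZ)\circ(BZ)=u\cdot(CZ)+E$ once you set $E:=(AZ)\circ(BZ)-u\cdot(CZ)$. So $(\mathcal{I}_j,\mathcal{W}_j)\in\mathcal{R}_{adm}$ in this sense does not give you (i)--(iii), and Step~3 has nothing to act on.

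\emph{$X$ is not bound.} Your claim that hashing $X$ into $\mathsf{ctx}$ ties $x_j$ to ${\mathsf{hash_{PHC}}}_j$ fails. $X$ is not part of $\mathsf{stmt}$, and the individual $x_i$ exist only encrypted or folded into $x_f$. A digest of $X$ entering unverified challenges imposes no equality between entries of $X$ and any proven public input.

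\emph{Issuer keys are unconstrained.} Nothing restricts $x_j.\mathsf{pk_{issuer}}$ to a recognized issuer. Your Case~(a) needs this in order to embed the challenge key.

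Together these give an attack on the construction as written. Choose any $(W,x,u)$, set $E$ as above, and commit to obtain $\mathcal{I}_{acc,N}$. Perform the padding fold honestly with the public padding pair, prove $\mathcal{C}_2$ honestly, and submit with any fresh $X$. The contract accepts, yet no signature is forged and the SNARK statement is true. So no reduction to the two primitives named in the proposition can exist for this protocol.

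Making Step~2 sound would require changing the construction. One option is Nova-style in-circuit verification of every fold with strict incoming instances. It would also need a digest of $X$ and of the admissible issuer keys exposed as a public input of $\mathcal{C}_2$, recomputed by the contract and tied inside the circuit to the folded $x_i$. Even then, your reduction would also rely on NIFS knowledge soundness, commitment binding and random-oracle collision resistance. None of these is among the targets the proposition lists. Short of such changes, your argument, like the paper's, goes through only if acceptance of $\pi_{zk}$ is assumed to certify (i)--(iii) for every entry of $X$. That should be stated as an explicit assumption rather than derived.
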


\begin{proof}[Proof sketch]
Any admitted batch must pass the on-chain zkSNARK verification for the committed admission relation. If the accepted batch encodes a false admission statement, then the adversary breaks zkSNARK soundness. Otherwise, the accepted statement must satisfy the embedded checks for PHC validity, issuer certification, and holder-key ownership; fabricating such a statement without a valid credential and holder key yields a forgery against the corresponding signature component. Hence a successful forged admission reduces to one of these primitive failures.
\end{proof}

\begin{proposition}[Duplicate-admission resistance]
Under Assumptions A1--A3, if a PPT adversary causes the same real-world identity anchor to be admitted twice without triggering the duplicate-registration check, then one can construct a PPT adversary that breaks collision resistance of $\mathsf{hash_{PHC}}$, breaks zkSNARK soundness, or violates the issuer-side uniqueness assumption in A1.
\end{proposition}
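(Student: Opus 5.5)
We argue by contradiction, splitting on how the two admissions of the same real-world identity anchor differ. Assume a PPT adversary $\mathcal{A}$ causes two admissions attributable to one real-world person to be accepted by the Verifier Contract. By A3 the contract executes faithfully, so both accepted batches passed $\mathsf{zkSNARK.VERIFY}$, and every $\mathsf{hash_{PHC}}$ entry in the later submitted $X$ passed the $\mathsf{Traverse}$ check against stored anchors. Let the two admitted entries carry hash values $h$ and $h'$ (possibly in the same batch or different batches). We then build a reduction $\mathcal{B}$ that runs $\mathcal{A}$, inspects the accepted transcripts, and falls into one of three cases.

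\textbf{Case 1: the accepted statement is false.} Either accepted proof $\pi_{zk}$ certifies a batch anchor for which no valid opening exists. This includes the case where it does not satisfy $\mathcal{C}_2$, or where the admitted entry's $\mathsf{hash_{PHC}}$ is not bound to a witness satisfying condition (i) $x.\mathsf{hash_{PHC}}=\mathsf{Hash}(\mathsf{Canon}(W))$. In this case $\mathcal{B}$ outputs that proof as a soundness break. Here we invoke the transcript binding of Phase~II: $X$ enters the Fiat--Shamir context, so the submitted hashes are those folded into the certified instance. This is the same reduction step as in the Admission-integrity proposition, which we reuse.

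\textbf{Case 2: the statements are true and $h=h'$.} Each admission then has an extractable witness $W,W'$ with $\mathsf{Hash}(\mathsf{Canon}(W))=h$ and $\mathsf{Hash}(\mathsf{Canon}(W'))=h'$. If $h=h'$, the contract must have rejected the second registration, either at $\mathsf{Traverse}$ or, within one batch, through a duplicate check on $X$. So under A3 this case cannot occur, unless the in-batch duplicate check is missing; we will have to argue or assume that $\mathsf{Traverse}$ covers duplicates inside a single $X$.

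\textbf{Case 3: the statements are true and $h\neq h'$.} Then either $\mathsf{Canon}(W)\neq\mathsf{Canon}(W')$, meaning two distinct valid issuer-signed PHCs exist for the same person (condition (ii) together with the Admission-integrity proposition rules out forged credentials), which violates the issuer-side uniqueness part of A1. Or the canonical forms coincide, which is impossible since $\mathsf{Hash}$ is a function. Collision resistance enters in the complementary direction: the real-world anchor is identified with $\mathsf{Canon}(W)$. If $\mathcal{A}$ produces a second credential $W^{*}\neq W$ with $\mathsf{Hash}(\mathsf{Canon}(W^{*}))=h$ used to impersonate the stored anchor, then $(\mathsf{Canon}(W),\mathsf{Canon}(W^{*}))$ is a collision. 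We will state this mapping explicitly by defining the anchor as the canonical PHC.

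The main obstacle is Case 1. Folding soundness only guarantees that the accumulated instance is satisfiable. Deducing that each individual $\mathsf{hash_{PHC}}_i$ in $X$ corresponds to a satisfying per-user witness requires knowledge soundness of NIFS under Fiat--Shamir, so that per-instance witnesses can be extracted by rewinding. It also requires that $x_f$ be tied to $X$. I would first try to argue that $X$ is bound through $\mathsf{ctx}$ and the per-step challenges. If that alone is insufficient, I would invoke A2's Nova/NIFS consistency assumption to extract individual witnesses, and make the resulting assumption explicit rather than prove it.
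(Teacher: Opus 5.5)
Your three-way split is the paper's own: its sketch likewise sorts a successful duplicate into a $\mathsf{hash_{PHC}}$ collision, ``a false accepting proof for an inconsistent batch,'' or an A1 violation. However, the step you single out as the main obstacle is a genuine gap, and Case~1 as written does not produce a soundness break. The contract verifies only $\mathsf{stmt}=(\mathcal{I}_{acc,N},\overline{T}_{N+1},\mathcal{I}_{acc,N+1})$. Neither $X$ nor $\mathsf{ctx}$ is a public input, and $\mathcal{C}_2$ re-verifies only the terminal padding fold and the satisfiability of $\mathcal{I}_{acc,N+1}$. Nobody whose acceptance matters recomputes the Phase~II challenges or checks that $x_f$ is the challenge-weighted combination of public inputs carrying the hashes in $X$. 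Hence an entry of $X$ that is bound to no witness does not make the verified statement false, and $\mathcal{B}$ has nothing to output.

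Concretely, a PBS can set $\mathcal{I}_{acc,N}:=\mathcal{I}_{pad}$; it already knows $W_{pad}$, since it computes $T_{N+1}$ in the clear. It then folds this with $\mathcal{I}_{pad}$, proves $\mathcal{C}_2$ honestly, and submits the result with an arbitrary $X$ of fresh hashes and self-chosen $\mathsf{pk_{seed}}$ values. The contract accepts, so a user already admitted once obtains further admitted seed keys without any collision, false proof, or issuer misbehaviour. The Fiat--Shamir binding of $X$ you invoke lives only in the off-chain transcript, which the verifier never consults.

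Appealing to NIFS knowledge soundness cannot repair this. Extraction applies only to folds the adversary is forced to exhibit and have verified. Here the only such fold is the padding fold, which yields at best a witness for $\mathcal{I}_{acc,N}$ (possibly $\mathcal{W}_{pad}$) and says nothing about $X$. What is missing is a protocol-level binding that the paper does not specify. One option is to make a digest of $X$ a public input of the final proof and let the proved relation cover the entire fold chain, with each step checking the previous $\mathsf{NIFS.Verify}$ and that its $x_i$ contains the $i$-th $\mathsf{hash_{PHC}}$ of $X$ (a Nova-style augmented circuit). Another is to have the contract recompute the challenges from posted per-step commitments. With such a binding in place, an unbound entry becomes a false statement, Case~1 becomes an honest soundness reduction, and Cases~2--3 obtain their witnesses $W,W'$ by extraction. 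Without it, the binding must be stated as an extra assumption.

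Two smaller points. First, Case~3 relies on the Admission-integrity proposition and so imports signature unforgeability, which is not among the three targets in the conclusion unless you read ``valid PHC'' in A1 as ``verifies under a recognized issuer key.'' Second, as you note, the Verifier Contract (Algorithm~\ref{alg:uiasc}) checks $X$ only against previously stored anchors. In-batch duplicates must therefore be excluded by assumption, which the paper does tacitly with ``rejects repeated anchors atomically.''
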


\begin{proof}[Proof sketch]
The Verifier Contract records admitted $\mathsf{hash_{PHC}}$ values on-chain and rejects repeated anchors atomically during settlement. A successful duplicate admission therefore implies either a collision in $\mathsf{hash_{PHC}}$, a false accepting proof for an inconsistent batch, or violation of issuer-side uniqueness in A1.
\end{proof}

\begin{proposition}[Anonymous yet enforceable account provisioning]
Under Assumptions A2--A3, any Soul Account accepted by the Soul Registry is authorized by some admitted seed identity, and any PPT adversary's advantage in identifying which admitted seed identity produced the accepted provisioning signature is bounded by its advantage against MLSAGS anonymity.
\end{proposition}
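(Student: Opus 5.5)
The plan is to prove the two halves of the proposition separately, each by a direct reduction to one MLSAGS property. A3 guarantees that the Soul Registry Contract executes Algorithm~\ref{alg:usacsc} faithfully, so acceptance implies two things. First, $\mathsf{Traverse}(L)$ succeeded, meaning every key in $L$ is an admitted seed public key recorded by the Verifier Contract in Phase~IV. Second, $\mathsf{LRS.Verify}(1^k,1^n,\mathsf{addr_{soul}},L,\sigma)=1$.

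\textbf{Authorization.} Suppose a PPT adversary $\mathcal{A}$ gets the registry to accept $(\sigma,L,\mathsf{addr_{soul}})$ while knowing the secret key of no member of $L$. I would build a forger $\mathcal{B}$ for the MLSAGS unforgeability game. $\mathcal{B}$ embeds the challenge public keys as the seed keys of honest admitted users. It simulates the Verifier Contract's registry, letting $\mathcal{A}$ register its own seeds, and answers all signing queries by honest users through its signing oracle. When $\mathcal{A}$ outputs an accepted tuple whose ring consists only of honest keys and whose message was never queried, $\mathcal{B}$ outputs it as a forgery. If instead the ring contains an $\mathcal{A}$-controlled key, then that admitted seed identity authorized the account, which is exactly what the statement claims.

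The main obstacle is the case where $\mathcal{A}$ controls some ring member but the signature was \emph{produced} with an honest key's image, that is, framing. I would handle it in one of two ways. The first is to invoke the standard MLSAGS notion in which unforgeability includes non-frameability, so that key images of honest keys cannot be produced without the secret key. The second, if only the basic notion is available, is to argue that the statement only requires ``some admitted seed identity'' to authorize the account. Under that reading, any signer in $L$ suffices, and a signature valid under a ring of admitted keys already meets the claim.

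\textbf{Anonymity.} I would run a direct hybrid or reduction. Given a distinguisher $\mathcal{D}$ that guesses the signer index among two honest ring members with advantage $\epsilon$, construct $\mathcal{B}'$ for the MLSAGS anonymity game. $\mathcal{B}'$ receives the challenge keys and places them as admitted seeds. It simulates everything else, including public on-chain state such as $X$, stored $\mathsf{hash_{PHC}}$ values, and previously recorded key images, from public data. It forwards the challenge signature on $\mathsf{addr_{soul}}$ to $\mathcal{D}$ and outputs $\mathcal{D}$'s guess, which gives advantage exactly $\epsilon$.

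The remaining point to check is that nothing else the registry publishes depends on the signer. The stored pair $(y_0,\mathsf{addr_{soul}})$ is part of $\sigma$ or chosen freshly. The seed key was posted in Phase~IV independently of $\mathsf{addr_{soul}}$, and $\mathsf{addr_{soul}}$ is freshly generated. Any linkage via the key image is therefore already accounted for in the MLSAGS anonymity game, which conditions on the challenge keys not having signed elsewhere, consistent with linkability.
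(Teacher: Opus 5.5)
Your proposal is correct and follows the same route as the paper's proof sketch. Because A3 makes the registry's checks binding, acceptance implies a valid MLSAGS signature over a ring of admitted seed keys; authorization then reduces to MLSAGS unforgeability, and any deanonymization advantage transfers directly to an MLSAGS anonymity adversary. Your explicit key-embedding reductions and your treatment of the framing case spell out details that the paper's short sketch leaves implicit.
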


\begin{proof}[Proof sketch]
The Soul Registry accepts provisioning only if the submitted MLSAGS signature verifies against a ring of admitted seed public keys. Thus, acceptance without any admitted signer contradicts MLSAGS correctness or unforgeability. Given that some admitted signer exists, any non-negligible deanonymization advantage transfers directly to an adversary against MLSAGS anonymity.
\end{proof}

\begin{proposition}[Single-account enforcement]
Under Assumption A2, if a PPT adversary causes two distinct Soul Accounts derived from the same admitted seed identity to be accepted without reusing the same recorded key image, then one can construct a PPT adversary that breaks MLSAGS linkability.
\end{proposition}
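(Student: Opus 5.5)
The plan is a direct reduction: from any adversary that breaks single-account enforcement we build an adversary that breaks MLSAGS linkability. Assume a PPT adversary $\mathcal{A}$ causes the Soul Registry Contract to accept two provisioning requests $(\sigma_1, L_1, \mathsf{addr}_1)$ and $(\sigma_2, L_2, \mathsf{addr}_2)$, with $\mathsf{addr}_1 \neq \mathsf{addr}_2$, that both originate from the same admitted seed identity $\mathsf{pk_{seed}}$. Acceptance means three things, read off from Algorithm~\ref{alg:usacsc}. First, every key in $L_1$ and $L_2$ is a registered admitted seed key. Second, $\mathsf{LRS.Verify}$ outputs $1$ on both signatures. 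Third, the extracted key images satisfy $\mathsf{KeyImage}(\sigma_1) \neq \mathsf{KeyImage}(\sigma_2)$, because the second request would otherwise be rejected as ``Already provisioned''.

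The reduction $\mathcal{B}$ plays the MLSAGS linkability game. It receives the challenger's public keys and embeds them as the seed public keys of the admitted users. Batch admission under A3 is simulated by writing these keys into the registry as admission anchors; $\mathcal{B}$ needs no seed secret keys to do this, since Phases I--IV never use $\mathsf{sk_{seed}}$. When $\mathcal{A}$ submits provisioning requests for honest users, $\mathcal{B}$ answers them with the linkability game's signing oracle, and it forwards any corruptions. Once $\mathcal{A}$ succeeds, $\mathcal{B}$ outputs the two accepted signatures. They verify, their key images differ, and both are attributed to a single seed identity. Attributed to that one identity, two valid signatures with distinct key images contradict linkability: in the standard one-more formulation, the number of valid signatures with distinct key images exceeds the number of signing keys the adversary controls. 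For MLSAGS this also follows from the determinism of $y_0 = \mathsf{sk_{seed}}\cdot \mathsf{H_p}(\mathsf{pk_{seed}})$ together with soundness of the ring proof.

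The main obstacle is making ``derived from the same admitted seed identity'' precise, because the signer is hidden inside the ring. I would first use the counting form of linkability. Suppose $\mathcal{A}$ controls $k$ seed secret keys and obtains more than $k$ accepted Soul Accounts with pairwise distinct key images, excluding outputs of the signing oracle. Then $\mathcal{B}$ wins the linkability game directly. This avoids identifying the signer at all. If a signer-specific statement is wanted instead, I would extract the secret key from a forked pair of MLSAGS transcripts via the forking lemma in the random-oracle model assumed in A2, and then use $\mathcal{A}$'s success to show that both extracted keys are the same $\mathsf{sk_{seed}}$ even though the key images differ. That is impossible because $\mathsf{H_p}$ is fixed and the key image is deterministic. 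Either way, the success probability of $\mathcal{B}$ equals that of $\mathcal{A}$ up to the negligible simulation slack, which completes the reduction under A2.
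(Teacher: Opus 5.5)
Your proposal is correct and follows the same argument as the paper's sketch: the Soul Registry only accepts fresh key images, while MLSAGS linkability (concretely, determinism of $y_0=\mathsf{sk_{seed}}\cdot\mathsf{H_p}(\mathsf{pk_{seed}})$ under a sound ring proof) forces every valid signature under one seed key to carry the same image, so two accepted accounts from one seed with distinct images contradict linkability. The paper simply identifies ``derived from the same seed identity'' with ``signed under the same $\mathsf{sk_{seed}}$'' and stops, so your explicit embedding of challenger keys and your counting/extraction treatment of the hidden signer are extra rigor rather than a different route. One small point: if you use the counting form against an adversary holding $k>1$ seed keys, the two offending signatures alone do not exceed $k$, so pad them with fresh signatures under the adversary's other corrupted keys, whose images avoid theirs, to reach $k+1$ unlinked signatures.
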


\begin{proof}[Proof sketch]
The registry records the MLSAGS key image $y_0$ after successful provisioning and rejects later attempts carrying the same image. Since MLSAGS linkability requires all signatures under a fixed seed secret key to induce the same key image, accepting two distinct provisionings from the same admitted seed identity without repeated image use would violate MLSAGS linkability.
\end{proof}

\subsection{Privacy and Confidentiality Preservation}

The next two propositions capture G5--G6. For witness confidentiality, the operative condition is A5: the current prototype does not expose a tunable threshold-decryption parameter, so the guarantee is stated only for batches that leave at least two participants outside the PBS coalition.

\begin{proposition}[Witness confidentiality against the PBS]
Fix a finalized batch of size $N$. Under Assumptions A2 and A5, any coalition consisting of the PBS and at most $N-2$ batch participants cannot recover any specific remaining participant's underlying witness from the transcript except with negligible probability beyond what is implied by the public commitments and the intentionally revealed folded batch output.
\end{proposition}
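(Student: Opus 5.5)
I will argue by a hybrid/simulation argument that splits the coalition's view into a pre-release phase, covered by MKHE semantic security, and a post-release phase, covered by the fact that the revealed accumulator is a masked linear combination. Fix the batch, the coalition $\mathcal{A}$ (the PBS plus at most $N-2$ participants), and a target honest participant $j$. By A5 there is at least one further honest participant $k\neq j$.

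The coalition's view consists of four parts:
(a) public keys, ciphertexts $\{c_g\}$ and accumulator ciphertexts $\hat{c}_{g_{acc}}$ posted at every folding step;
(b) public commitments $\overline{E}_i,\overline{W}_i,\overline{T}$, and the Fiat--Shamir challenges $v$;
(c) the decryption shares $\mu_{i,g}$;
(d) the fused outputs $(x_f,u_f,E_f,r_{E,f},W_f,r_{W,f})$.

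\textbf{Hybrid 1 (pre-release).} I replace every ciphertext encrypted under, or homomorphically depending on, an honest key with an encryption of zero. The challenges $v$ are unaffected, since they are computed only from commitments (equation~\eqref{equ:FS-challenge}). Any distinguishing advantage between this hybrid and the real view reduces to BGV-style MKHE semantic security (A2), via a standard multi-key hybrid over the honest keys. The reduction embeds the honest public keys and simulates the coalition's folding steps homomorphically.

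\textbf{Hybrid 2 (share simulation).} Next I replace the honest decryption shares $\mu_{i,g}$ of participants $j$ and $k$ with shares simulated from the final plaintext $(x_f,\dots)$ and the coalition's own secret keys. This uses the standard share-simulatability property of MKHE distributed decryption with noise flooding, which I will take as part of A2 and state explicitly as the interface in Appendix~\ref{supp:mkhe}. After these two hybrids, the coalition's view is computable from three inputs only: the public commitments, the folded output, and the coalition members' own witnesses.

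\textbf{Post-release information.} It remains to show that this residual information does not determine $W_j$. Unrolling the folding recurrence gives
\[
W_f=\sum_i \lambda_i W_i,
\]
with public nonzero coefficients $\lambda_i$ (products of challenges); the same form holds for $x_f$, $u_f$, $E_f$ and the randomness. The coalition can subtract its own contributions and obtain
\[
\lambda_j W_j+\lambda_k W_k+\sum_{\text{other honest}}\lambda_i W_i .
\]
Because $W_k$ carries independent witness entropy, and the randomness terms $r_{W_k},r_{E_k}$ are uniform and fresh, this sum acts as a one-time mask on $\lambda_j W_j$ at least in the randomness coordinates. In the witness coordinates, any candidate value for $W_j$ is consistent with some value of $W_k$. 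Hence recovering $W_j$ is no easier than guessing it from the commitments alone, which is exactly the "beyond what is implied" clause of the statement.

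\textbf{Main obstacle.} The step I expect to be hardest is this last one. Witness vectors are not uniform, so the masking is not information-theoretically perfect: the coalition might exploit the R1CS structure, or the cross-terms $T$ folded into $E_f$, which mix $W_j$ and $W_k$ multiplicatively. I would handle it by phrasing the claim as "no more than what is implied by the folded output," i.e., a simulation relative to the leakage function $(W_j,W_k,\ldots)\mapsto$ folded output. I would then argue only that the isolation attack described in A5 needs $N-1$ corrupted parties. Finally, hiding of the commitments (fresh $r$) ensures that the commitments add no leakage beyond the statement's allowance.
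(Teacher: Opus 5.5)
Your decomposition matches the paper's sketch. It uses a pre-release hybrid reducing to MKHE semantic security, then the A5 argument that, after subtracting its own contributions, the coalition sees only a public linear combination of at least two honest terms. Making share simulation explicit is an improvement, since the paper never accounts for honest decryption shares. As written, however, your hybrid sequence does not go through. Decryption shares appear long before Stage~3: every fold step opens $\hat{c}_{\overline{T}}$ by collaborative decryption, and your inventory (a)--(d) omits these shares. In Hybrid~1 the semantic-security reduction would have to output honest shares, both per-step and final, without the honest secret keys. Moreover, once honest ciphertexts are replaced by encryptions of zero, those shares no longer open to the real $\overline{T}$ and $(x_f,\dots)$, so your claim that the challenges $v$ are unaffected also fails. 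The order must be reversed. First replace every honest share, per-step and final, by one simulated with noise flooding from the known plaintexts (the $\overline{T}$ values and the folded output, both exempted by the statement) and the coalition's keys. Only then, with the honest secret keys no longer used, invoke semantic security.

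The post-release step is a genuine failure. The claim that any candidate $W_j$ is consistent with some $W_k$ ignores that $W_k$ must itself be a valid witness. R1CS encodings of signature verification typically contain booleanity-constrained wires, such as scalar bit decompositions. On such a coordinate you see $\lambda_j b_j+\lambda_k b_k$ with public coefficients (here $\lambda_1=1$ and $\lambda_i=v_i$, not products). The values $0,\lambda_j,\lambda_k,\lambda_j+\lambda_k$ are distinct except with negligible probability, so both bits are decoded exactly. Masking in the randomness coordinates says nothing about $W_j$. (By contrast, your worry about cross-terms in $E_f$ is moot: by folding completeness $E_f=(AZ_f)\circ(BZ_f)-u_f\,(CZ_f)$ is a function of $Z_f$.) Your closing remark on commitment hiding is also wrong for $\overline{T}$. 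A coalition member performing a fold chose $r_T$, so the opened $\overline{T}$ gives it $H_T T$; at step~2 with $h_1$ honest and $h_2$ corrupt, this is an affine image of $Z_1$ alone. The paper's ``not an isolated term'' sketch makes the same leap. What your reordered hybrids actually prove is the simulation reading: the coalition's view is computable from the commitments, the folded output, and its own inputs. That satisfies the statement's literal exemption, but it holds even with one honest participant, so A5 is never used. The non-isolation content that A5 is meant to supply is exactly what your masking argument cannot deliver.
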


\begin{proof}[Proof sketch]
Before share release, a hybrid argument replacing honest users' encrypted witnesses with encryptions of dummy values yields a distinguisher against BGV-style MKHE semantic security if the adversary can still recover an individual witness. After share release, the coalition learns the final folded plaintext accumulator. By A5, when at least two participants remain outside the PBS coalition, subtracting the colluders' known contributions reveals only a public linear combination of at least two unknown witness-dependent terms, not an isolated term attributable to any one remaining participant. If instead only one participant remains outside the coalition, then that participant's folded plaintext contribution to the revealed accumulator may be isolated after share release. This is weaker than direct recovery of the participant's entire original witness, but it is already outside the confidentiality claim of the current prototype.
\end{proof}

\begin{proposition}[Batch-consistency and settlement integrity]
Under Assumptions A2--A3, in the Fiat--Shamir random-oracle setting, if a PPT adversary tampers with the batch transcript or submits transcript-inconsistent batch material yet still causes the Verifier Contract to accept, then one can construct a PPT adversary that breaks either Nova/NIFS consistency or zkSNARK soundness with non-negligible probability.
\end{proposition}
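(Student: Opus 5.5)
The argument will be a reduction in the style of the earlier propositions. Fix a PPT adversary $\mathcal{A}$ that makes the Verifier Contract accept $(\pi_{zk}, X, \mathcal{I}_{acc,N}, \overline{T}_{N+1}, \mathcal{I}_{acc,N+1})$. The accepted material must also be transcript-inconsistent: it differs from what an honest fold of the published per-user instances $\{\mathcal{I}_i\}$ and the ordered list $X$ would produce under the Fiat--Shamir challenges of Eq.~\eqref{equ:FS-challenge}.

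Under A3 the contract runs Algorithm~\ref{alg:uiasc} faithfully, so acceptance means $\mathsf{zkSNARK.VERIFY}(\mathsf{vk},\mathsf{stmt},\pi_{zk})=1$. I split on whether $\mathsf{stmt}$ is true for $\mathcal{C}_2$.

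\emph{Case 1: the statement is false.} Either $\mathcal{I}_{acc,N+1}\neq\mathsf{NIFS.Verify}(\mathsf{vk_{NIFS}},\mathcal{I}_{acc,N},\mathcal{I}_{pad},\overline{T}_{N+1})$, or the padding instance is not the public one, or no witness makes $\mathcal{C}_1$ accept $\mathcal{I}_{acc,N+1}$. A reduction $\mathcal{B}_1$ simulates the contract, outputs $(\mathsf{stmt},\pi_{zk})$, and wins the zkSNARK soundness game with the same probability as $\mathcal{A}$ does in this case.

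\emph{Case 2: the statement is true.} Here the extractor, or soundness itself, gives a witness $\mathcal{W}_{acc,N+1}$ satisfying $\mathcal{C}_1$. By the NIFS knowledge-soundness property in the random-oracle model (Nova's Fiat--Shamir argument), a satisfying folded instance yields, by rewinding the oracle queries at each fold, satisfying witnesses for every constituent instance, i.e.\ for $\mathcal{I}_{acc,N}$ and, recursively, for each $\mathcal{I}_i$.

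The tampering must be placed somewhere, and the key step is to pin down where. The recursive extraction is organized over the \emph{published} transcript, because each challenge $v$ is obtained by hashing $\mathsf{ctx}$, which contains the digest of $X$ together with $\overline{E}_{acc},\overline{W}_{acc},\overline{E}_i,\overline{W}_i,\overline{T},i$. So a transcript-inconsistent submission has one of two forms. (a) $\mathcal{I}_{acc,N}$ is not the honest recursive fold of the published commitments under the recomputed challenges. This means the adversary produced a folded instance satisfying the relation whose public-commitment update disagrees with $\overline{E}_{new}=\overline{E}_{acc}+v\overline{T}+v^2\overline{E}_i$ for the hashed $v$. That is precisely a break of NIFS consistency: either it is a prediction of a random-oracle output, which happens with probability about $q/|\mathbb{F}|$, or it yields two distinct openings of the same commitment, which breaks binding under A2. (b) $X$ is rewritten after folding. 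Then $\mathsf{ctx}$ changes, every recomputed $v$ changes, and we are back in (a).

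The main obstacle is making Case 2 rigorous. The batch data $X$ and $\mathcal{I}_{acc,N}$ are public inputs to the contract, but the contract does not itself recompute the $N$ challenges. I will therefore state explicitly, as a standing reading of G6, that ``transcript-consistent'' means consistent with the hash-derived challenges over the published transcript, which the proof statement binds through $\mathsf{vk_{NIFS}}$ and $X$ inside $\mathsf{ctx}$. Under that reading, I will first try a forking-lemma argument, losing a polynomial factor in the number $q$ of oracle queries. Summing the two cases bounds $\mathcal{A}$'s success by $\mathsf{Adv}^{\mathrm{snd}}_{\mathrm{zkSNARK}}+\mathsf{Adv}^{\mathrm{NIFS}}+\mathrm{negl}$, which gives the proposition.
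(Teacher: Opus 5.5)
Your two-way split matches the paper's sketch: an accepted false statement for $\mathcal{C}_2$ breaks zkSNARK soundness (your Case~1 is fine), and everything else is charged to Fiat--Shamir/NIFS consistency. The gap is in Case~2, at exactly the point you flag, and your patch does not close it. In Algorithm~\ref{alg:uiasc} the verified statement is $\mathsf{stmt}=(\mathcal{I}_{acc,N},\overline{T}_{N+1},\mathcal{I}_{acc,N+1})$. The submitted $X$ is never checked against it, $\mathsf{ctx}$ does not occur in it, and $\mathcal{C}_2$ runs $\mathsf{NIFS.Verify}$ only on the padding fold. So your claim that the statement binds the challenges ``through $\mathsf{vk_{NIFS}}$ and $X$ inside $\mathsf{ctx}$'' is false for this construction. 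Your standing reading of G6 therefore assumes the very consistency you need to prove.

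Both sub-cases then fail. For (b), take any accepted tuple $(\pi_{zk},\mathcal{I}_{acc,N},\overline{T}_{N+1},\mathcal{I}_{acc,N+1})$ and resubmit it with a rewritten $X'$ of fresh hashes. The contract checks $\pi_{zk}$ against the unchanged $\mathsf{stmt}$ and accepts. Since no one recomputes the $N$ challenges, you never ``return to (a)''. For (a), $\mathcal{C}_1$ checks only \emph{relaxed} satisfiability. Anyone can choose $(W,x,u)$, set $E:=(AZ)\circ(BZ)-u\cdot(CZ)$, commit, and obtain a satisfiable $\mathcal{I}_{acc,N}$ unrelated to the published $\overline{E}_i,\overline{W}_i$. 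This needs no oracle prediction and no double opening, because the adversary's commitments are simply different commitments. Thus, as specified, an accepting transcript-inconsistent submission exists that breaks no primitive. No reduction can establish the claim without strengthening what is verified.

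Separately, even where folds are checked, your recursive extraction ``down to each $\mathcal{I}_i$'' is not a forking argument with polynomial loss. Each fold needs three accepting transcripts because $E$ is quadratic in $v$, so naive nesting over $N$ folds costs on the order of $3^N$ rewinds.

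The paper's own sketch makes the same unsupported step (``the final proof circuit checks that the submitted accumulated instance is transcript-consistent''). You have therefore reproduced its argument together with its hole, and you deserve credit for isolating it. Repairing it means changing the construction, not the reduction. For example, the contract could derive a digest of the submitted $X$ as a public input, and the circuit could prove every fold starting from strict instances ($E=\mathbf{0}$, $u=1$). This would follow Nova's augmented step circuit, carrying a running hash of the per-user public inputs $(x_i,{\mathsf{hash_{PHC}}}_i,{\mathsf{pk_{seed}}}_i)$. G6 would then follow directly from zkSNARK soundness and collision resistance, with no forking at all.
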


\begin{proof}[Proof sketch]
Each folding step derives its challenge from the public transcript through Fiat--Shamir, so any change to commitments, ordering, or batch metadata changes the derived challenge unless the adversary breaks Nova/NIFS transcript consistency in the random-oracle setting. The final proof circuit checks that the submitted accumulated instance is transcript-consistent and satisfiable. Thus an accepted tampered batch implies either a Nova/NIFS consistency break or a false accepting zkSNARK proof.
\end{proof}

\subsection{Operational Resilience and Limitations}

The propositions above cover correctness and confidentiality under A1--A5. Deployment-level risks remain.

\subsubsection{Faulty or missing decryption shares}
The current prototype assumes that a sufficient set of users eventually provides the required decryption shares. If a participant withholds or malforms its share, the batch cannot be finalized from the current transcript state; even a single malicious participant can therefore stall the batch it joins. The present recovery rule is off-chain: users with missing or invalid shares are excluded after timeout and the affected requests are re-queued. Thus the issue is a liveness bottleneck rather than a safety failure of already confirmed on-chain state, but the prototype does not yet provide cryptographic accountability for the abort.

A production-grade design would strengthen this stage with publicly verifiable share validation or accountable abort. Representative directions include PVSS~\cite{StadlerPVSS1996} and MPC with identifiable abort~\cite{IshaiOstrovskyZikas2014IdentifiableAbort}. Integrating such mechanisms into batch materialization could support stronger exclusion and recovery policies than the current re-queue rule, but this is outside the scope of the prototype.

\paragraph*{Other deployment limitations.}
Because the PBS role is permissionless, another node may take over after timeout from the same public transcript; this limits monopolization but does not eliminate temporary censorship or latency inflation. Likewise, off-chain transcript artifacts are replicated through IPFS/IPNS, so confirmed on-chain state remains trustworthy even though persistent transcript unavailability can delay completion. Batch settlement is also exposed to transaction ordering and mempool races: atomic duplicate checks prevent forged admission under successful contract verification, but conflicting anchors can still cause revert or re-queue cost. Finally, ZK-AMS does not yet integrate post-admission PHC revocation, expiry, or reissuance, nor does it provide native Soul-Account recovery after key loss. Possible extensions include epoch-based re-admission, revocation registries, social recovery, recovery-authorized reprovisioning, private relay submission, encrypted mempools, or commit-then-reveal style anchoring.

\subsubsection{Current scope of guarantees}
ZK-AMS provides admission integrity, duplicate-admission resistance, witness confidentiality during batching, and anonymous-but-enforceable provisioning under the stated assumptions. Its current limits are equally clear: uniqueness still depends on PHC issuance, confidentiality does not cover the coalition ``PBS + all but one participant'' because that coalition may isolate the last remaining participant's folded plaintext contribution after share release, liveness against malicious share withholding or strategic PBS behavior is only partially addressed through off-chain recovery, transaction ordering can still impose denial or re-queue cost, and neither PHC lifecycle management nor Soul-Account recovery is yet integrated. These limits do not invalidate the correctness of admitted batches, but they do affect privacy margin, availability, and long-term deployability.

%==================================================================
\section{Implementation and Experimental Evaluation} \label{experiment}
%==================================================================

We evaluate ZK-AMS as a deployable admission infrastructure for Web 3.0 platforms, where onboarding must remain privacy-preserving, Sybil-resistant, and operationally predictable under bursty demand. Compared with the preliminary conference version, the present implementation includes the full end-to-end workflow of confidential off-chain batching, permissionless batch finalization, on-chain batch admission, and anonymous account provisioning. Our evaluation aims to answer the following questions:

\begin{itemize}
	\item \textbf{EQ1 (Scalable settlement)}: Does recursive proof aggregation keep on-chain verification cost stable as the admission batch size $N$ grows, thereby enabling cost-predictable batch settlement?
	\item \textbf{EQ2 (Admission-settlement latency)}: What is the admission-settlement processing latency, excluding batch-formation waiting time, and how is it distributed across client-side preparation, off-chain transcript operations, PBS-side finalization, and on-chain settlement?
	\item \textbf{EQ3 (Configuration trade-offs)}: How do batch size $N$ and ring size $L$ affect throughput, latency, and per-user cost, and what parameter settings are practically attractive?
\end{itemize}

\subsection{Implementation Setup and Metrics}

\textbf{Testbed}: We deploy the Verifier Contract and Soul Registry Contract on a private six-node Ethereum network running Clique proof-of-authority (PoA) with Go-Ethereum (Geth)~\cite{go-ethereum} on Alibaba Cloud. The block period is 12\,s. We use private PoA to stabilize block production and contract execution so that workflow cost is repeatable under controlled load. For throughput translation, we set the block gas limit to 60M and use the EIP-1559 target gas (30M) as the nominal per-block gas budget. Thus, the reported gas costs are more portable across EVM-compatible deployments than the wall-clock confirmation delay, which would differ on Ethereum mainnet or L2 networks.

\textbf{Roles and hardware}: We evaluate two representative roles on the same hardware platform, a Mac mini with Apple M4 and 16\,GB RAM. One device serves as the \emph{Permissionless Batch Submitter (PBS)} for batch fusion and proof finalization, and the other as the \emph{user client} for credential processing, confidential-folding participation, and MLSAGS signing.

\textbf{Software stack}: We implement the contracts in Solidity on the private Ethereum testbed; the proof layer uses Groth16~\cite{groth2016size} over BN254, recursive aggregation uses Arkworks~\cite{arkworks} and Sonobe~\cite{sonobe}, and the confidential batching layer uses a BGV-style MKHE implementation with IPFS/IPNS as the transcript substrate.

\textbf{Metrics}: We report PBS-side proving cost and throughput, on-chain settlement gas, admission-settlement processing latency to first confirmation (excluding batch-formation waiting time) and its breakdown, and correctness/operational metrics including decryption success and confidential-folding overhead. Unless otherwise stated, each measurement is repeated $n_{\mathrm{rep}}=10$ times and we report the mean.

\textbf{Evaluation scope}: We separate three result regimes. First, full admission-settlement path measurements (Phases~I--IV in Section~\ref{detaildesign}), including confidential folding, distributed decryption, batch finalization, and on-chain settlement, are reported up to $N=128$, the largest batch size repeatedly executed on the current hardware with the chosen MKHE parameterization. Second, larger-$N$ results up to $N=1100$ isolate the recursive proving and settlement path after batch compression and should therefore be read as proof-compression scaling evidence rather than full confidential-pipeline measurements. Third, Phase~V account-provisioning overhead is reported separately through MLSAGS signing and verification measurements rather than folded into the admission-settlement latency tables.

\revred{\textbf{Baseline and comparison scope}: We compare ZK-AMS experimentally against the original non-recursive zkBID design~\cite{zkbid}, which removes recursive batch compression while preserving the same admission semantics. This implemented baseline isolates the impact of recursive aggregation on proving time, on-chain verification cost, and amortized admission efficiency. We also report account-provisioning overhead through MLSAGS verification. For clarity, zkBID is the only implemented system-level baseline in this section.}

\subsection{MKHE Correctness and Noise Robustness}

We first test whether the confidential batching layer remains correct at representative large-batch checkpoints. Table~\ref{tab:mkhe_correctness} reports batch size, effective multiplication depth, repeated trials, decryption failures, and remaining noise margin, where margin (bits) denotes the modulus-chain slack to the first decryption failure.

\begin{table}[!t]
	\centering
	\caption{MKHE correctness in repeated large-batch checkpoints.}
	\label{tab:mkhe_correctness}
	\scriptsize
	\setlength{\tabcolsep}{3pt}
	\renewcommand{\arraystretch}{1.03}
	\begin{tabular}{@{}lcccc@{}}
		\hline
		\textbf{$N$} & \textbf{Depth} & \textbf{Trials} & \textbf{Fails} & \textbf{Margin (bits)} \\ \hline
		16 & 8  & 10 & 0 & 67 \\
		32 & 10 & 10 & 0 & 67 \\
		64 & 12 & 10 & 0 & 67 \\ \hline
	\end{tabular}
\end{table}

Table~\ref{tab:mkhe_correctness} shows zero decryption failures in these repeated checkpoints up to $N=64$ and a 67-bit modulus-chain slack throughout them. Table~\ref{tab:latency_breakdown} further includes successful full admission-settlement measurements at $N=128$, where confidential folding reaches 1615.85\,s. We therefore report full admission-settlement measurements up to $N=128$ and treat larger-$N$ plots in EQ1 as recursion-scaling evidence rather than direct measurements of the full confidential workflow. \revred{This overhead should be read as the cost of preserving witness confidentiality during batching, not as the source of the verifier-gas benefit, which comes from the recursive settlement path itself.}

\subsection{EQ1: Scalable Settlement and Predictable On-Chain Cost}

We next examine how the admission-settlement layer scales with batch size $N$. Unless stated otherwise, this subsection focuses on the recursive proving and settlement path after batch compression rather than the full confidential admission pipeline.

\subsubsection{PBS-side proving scalability}

The PBS pipeline is the main off-chain throughput bottleneck. In the non-recursive baseline, proving cost grows more rapidly with batch size because more admission workload is pushed directly into proof generation, whereas ZK-AMS operates on a batch-compressed statement obtained through recursive folding.

Fig.~\ref{ComparisonOfArithmeticCircuit} shows that the baseline circuit size grows almost linearly, from 5.16M constraints at $N=400$ to 14.19M at $N=1100$, whereas the recursive design stays fixed at 9.894M constraints. The non-recursive baseline is still smaller for $N \le 700$, but the crossover occurs between $N=700$ and $N=800$; from $N=800$ onward, the recursive design yields the smaller final proof circuit. In other words, recursive aggregation pays a fixed circuit footprint for better large-batch scaling.

Fig.~\ref{ComparisonOfZKPGeneration} shows that the baseline proving time grows from 69.27\,s at $N=400$ to 320.57\,s at $N=1100$, while ZK-AMS remains in a narrow band of about 140--150\,s. The baseline is faster for $N=400$--$700$, but ZK-AMS becomes faster from $N=800$ onward and widens the gap at larger batch sizes. In throughput terms, the baseline drops from 5.77 to 3.43 users/s, whereas ZK-AMS rises from 2.66 users/s to a peak of 7.76 users/s at $N=1000$ and remains above 7.7 users/s at $N=1100$. This shows that the recursive pipeline pays a fixed aggregation overhead at moderate batch sizes but scales better once the batch is large enough.

\subsubsection{On-chain cost predictability}

Fig.~\ref{ComparisonOfGasCost} reports \emph{batch-proof verification gas} only. The recursive ZK-AMS curve is shown for $2 \le N \le 256$ because recursive batch verification becomes meaningful only once at least one folding step exists; thus $N=1$ is omitted from the recursive curve and kept only as the single-user reference for the non-batched baseline. The baseline corresponds to verifying $N$ independent Groth16 admission proofs on-chain, so its verifier gas is computed as $C_{\mathrm{ind}}(N)\approx N\times 250.44$K from our measured single-proof verifier cost at $N=1$. Under this construction, the baseline rises from 250.44K at $N=1$ to 11.20M at $N=256$, whereas the recursive verifier gas stays near 791K across the recursive range. The crossover occurs between $N=8$ and $N=16$: the baseline is cheaper up to $N=8$, but from $N=16$ onward the recursive design incurs lower verifier-side gas. In short, recursive batch-proof verification keeps verifier gas approximately flat as batch size increases.

The full \emph{batch-admission settlement gas} further includes the contract-side admission overhead for storing and checking the submitted batch metadata. Combining the recursive batch-proof verification cost from Fig.~\ref{ComparisonOfGasCost} with the additional contract-side overhead of about 143K gas observed in Fig.~\ref{fig:performance_graph}(d), we estimate the full batch-admission settlement cost as about $791$K $+143$K $\approx 934$K gas.

At scale, the dominant chain-side cost shifts from batch admission settlement to per-user account provisioning. Using $C_{\mathrm{user}}(N)\approx C_{\mathrm{MLSAGS}} + \frac{C_{\mathrm{settle}}}{N}$ with $C_{\mathrm{settle}}\approx 934$K gas, the amortized batch-admission settlement contribution falls from about 467K gas at $N=2$ to about 3.6K gas at $N=256$; for sufficiently large batches, the per-user chain-side cost is therefore dominated by MLSAGS account provisioning rather than recursive batch settlement.

\subsection{EQ2: Admission-Settlement Latency Breakdown}

We next decompose the admission-settlement processing latency to first confirmation, excluding the batch-formation waiting time $T_{wait}(N)$ from Eq.~\eqref{equ:batch_latency}, into client-side preparation, IPFS/IPNS access, confidential folding and share generation, PBS-side fusion and proof finalization, and on-chain inclusion/execution delay.

\begin{table}[!t]
	\captionsetup{justification=centering}
	\centering
	\caption{Admission-settlement processing latency breakdown to first confirmation (excluding batch-formation waiting time).}
	\label{tab:latency_breakdown}
	\scriptsize
	\setlength{\tabcolsep}{3pt}
	\renewcommand{\arraystretch}{1.03}
	\makebox[\columnwidth][c]{%
	\begin{tabular}{ccccccc}
		\hline
		\textbf{$N$} & \textbf{Prep} & \textbf{IPFS} & \textbf{Fold} & \textbf{PBS} & \textbf{Chain} & \textbf{Total} \\
		& \textbf{(s)} & \textbf{(s)} & \textbf{(s)} & \textbf{(s)} & \textbf{(s)} & \textbf{(s)} \\ \hline
			1  & 0.51 & 0.04 & 0.13   & 125.85 & 60.48 & 187.00 \\
			16 & 1.56 & 0.06 & 86.41  & 124.24 & 58.88 & 271.14 \\
			32 & 1.71 & 0.06 & 218.78 & 118.67 & 60.56 & 399.78 \\
			64 & 1.76 & 0.07 & 610.91 & 122.89 & 58.50 & 794.14 \\
			128 & 1.78 & 0.05 & 1615.85 & 125.43 & 52.78 & 1795.89 \\ \hline
		\end{tabular}
	}
\end{table}

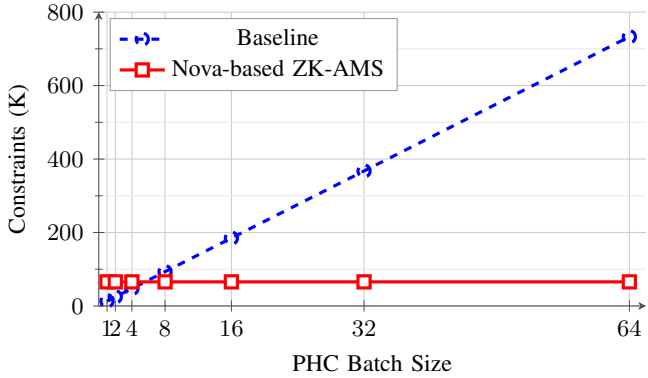
\begin{figure}[!t]
	\centering
	\resizebox{0.9\linewidth}{!}{\input{fig7_constraints.tikz}}
	\caption{Circuit constraint counts versus admission batch size $N$.}
	\label{ComparisonOfArithmeticCircuit}
	\vspace{-0.2cm}
\end{figure}

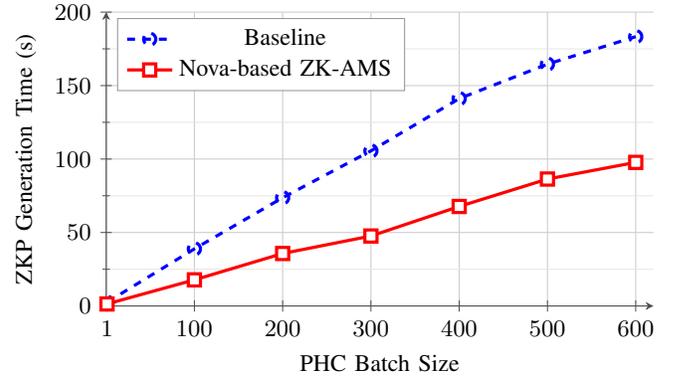
\begin{figure}[!t]
	\centering
	\resizebox{0.9\linewidth}{!}{\input{fig8_proving_time.tikz}}
	\caption{PBS proof-generation time versus admission batch size $N$.}
	\label{ComparisonOfZKPGeneration}
	\vspace{-0.4cm}
\end{figure}

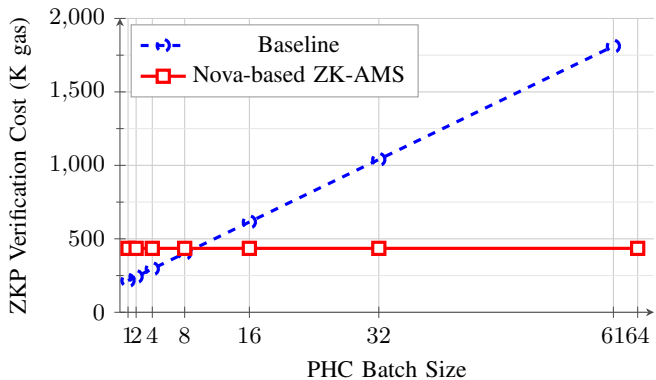
\begin{figure}[!t]
	\centering
	\resizebox{0.9\linewidth}{!}{\input{fig9_zkp_verify_gas.tikz}}
	\caption{On-chain gas cost for batch-proof verification versus batch size $N$.}
	\label{ComparisonOfGasCost}
	\vspace{-0.2cm}
\end{figure}

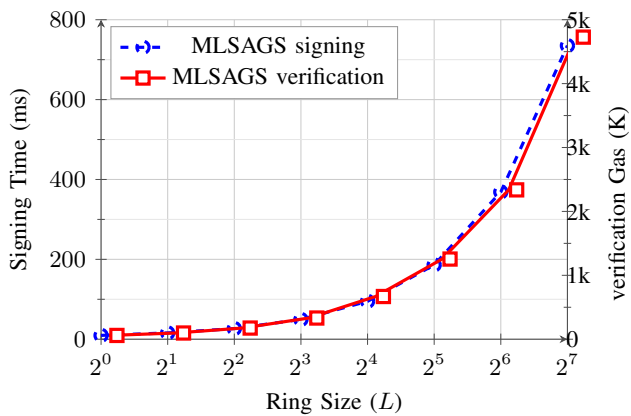
\begin{figure}[!t]
	\centering
	\resizebox{0.9\linewidth}{!}{\input{fig10_ringsig.tikz}}
	\caption{MLSAGS signing time (client) and on-chain verification gas versus ring size $L$.}
	\label{RingSigData}
	\vspace{-0.2cm}
\end{figure}

Table~\ref{tab:latency_breakdown} shows a clear separation between nearly batch-invariant and batch-sensitive components. The $N=1$ row serves only as a single-request reference point rather than a recursive-batching measurement. Across the measured batched runs, client-side preparation stays within 1.56--1.78\,s, IPFS/IPNS access within 0.05--0.07\,s, the observed chain-side 1-confirm delay within 52.78--60.56\,s, and PBS-side finalization roughly flat at 118.67--125.43\,s. Notably, the observed chain delay is substantially larger than the nominal 12\,s block period of our private PoA testbed, so wall-clock confirmation should be interpreted separately from gas portability. By contrast, confidential folding grows from 0.13\,s at $N=1$ to 1615.85\,s at $N=128$ and dominates the large-batch latency. As a result, the measured admission-settlement processing latency rises from 187.00\,s at $N=1$ to 1795.89\,s at $N=128$. This confirms that the full confidential admission-settlement path remains executable at $N=128$ on the current hardware, while also making clear that confidential folding is the limiting stage. Within the scope of these processing-only measurements, the critical path is therefore well approximated by $T_{\mathrm{adm}}^{(1)} \approx \max_i T_{\mathrm{client},i} + T_{\mathrm{transcript}} + T_{\mathrm{pbs}} + T_{\mathrm{chain}}$, where $T_{\mathrm{transcript}}$ includes IPFS/IPNS access and share availability; the full batch latency in Eq.~\eqref{equ:batch_latency} would additionally include $T_{wait}(N)$.

This latency profile indicates that the current prototype is better suited to high-value or bursty admission windows than to interactive consumer login. Still, several stages are naturally parallel across users, including credential preparation, ciphertext generation, and transcript publication, and these steps for batch $k+1$ can overlap with PBS finalization and on-chain settlement of batch $k$. By contrast, the encrypted accumulator update is only partially parallelizable because each fold is bound to the latest public transcript through a Fiat--Shamir challenge. The most realistic near-term optimization path is therefore pipelined batch overlap plus parallel preprocessing, rather than full elimination of the sequential folding bottleneck.

\begin{figure*}[htbp]
	\centering
	\includegraphics[width=0.85\textwidth]{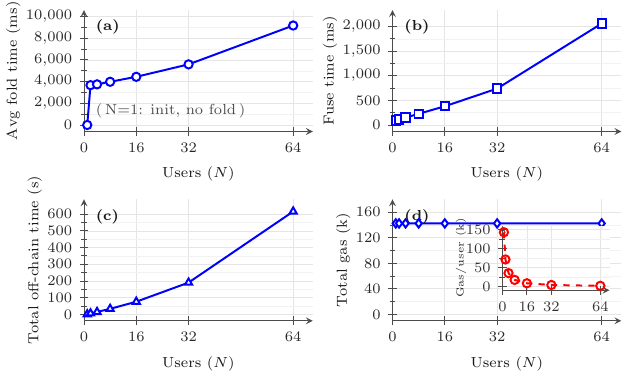}
	\caption{Scalability of confidential off-chain folding as the participant count $N$ increases:
		(a) average fold time;
		(b) fuse time;
		(c) aggregate off-chain time; and
		(d) on-chain gas overhead, with an inset showing amortized gas per admitted user.}
	\label{fig:performance_graph}
	\vspace{-0.2cm}
\end{figure*}

\subsection{EQ3: Configuration Trade-offs and System Bottlenecks}

\subsubsection{\texorpdfstring{Ring size $L$ and account-provisioning overhead}{Ring size L and account-provisioning overhead}}
The anonymity of post-admission account provisioning is controlled by the MLSAGS ring size $L$. Fig.~\ref{RingSigData} shows that both signing time and verification gas increase approximately linearly with $L$ over the range $L\in[1,256]$. Specifically, MLSAGS signing grows from 19.19\,ms at $L=1$ to 3425.11\,ms at $L=256$, while on-chain verification gas rises from 98.21K gas to 16.67M gas. This makes $L$ an explicit anonymity--cost knob. In practice, a ring size $L\in[8,16]$ remains the most practical operating region in our prototype, because it keeps client-side signing below about 221\,ms and verifier-side gas below about 1.07M while still providing a non-trivial anonymity set. This is also consistent with practical anonymity settings used in privacy-oriented systems such as Monero~\cite{monero_0_13_0_released_2018}.

\subsubsection{Decentralized confidential folding overhead}

Fig.~\ref{fig:performance_graph} reports the overhead of the confidential off-chain batching layer. As expected, average fold time, fuse time, and aggregate off-chain work all increase with $N$, while the amortized on-chain gas per admitted user decreases because the fixed batch-settlement cost is shared across more users. This trade-off is orthogonal to the verifier-gas benefit in EQ1: the latter comes from the recursive settlement path itself, whereas the confidential folding layer is the price paid for preserving witness confidentiality during batching. On the current hardware and parameter set, $N=16$--$32$ offers the clearest practical compromise between amortization and processing delay, while $N=64$--$128$ is better suited to scheduled onboarding windows in which a roughly 13--30 minute admission-settlement delay remains acceptable.

\subsubsection{PBS bottleneck profiling}

Profiling further shows that, within the PBS stage, proof finalization remains the dominant subtask for large batches. In our implementation, final proof generation accounts for 84.3\% of PBS-side latency at $N=64$, while transcript fusion accounts for 10.8\% and serialization plus transaction preparation accounts for the remaining 4.9\%. Peak memory consumption remains within the 16\,GB memory budget of the PBS device, indicating that the current PBS bottleneck is primarily compute-bound.

% \begin{figure*}[htbp]
%     \centering
% 	\input{figures/fig11_folding_performance.tikz}
%     \caption{Scalability of the decentralized privacy-preserving folding as the participant count ($N$) increases: (a) Avg fold time; (b) Fuse time; (c) Total off-chain time (aggregate load); and (d) total gas, with an inset showing amortized gas per admitted user.}
%     \label{fig:performance_graph}
% \end{figure*}

%================================================================================
\section{Related Work}\label{rw}
%================================================================================

\begin{table*}[!t]
	\centering
	\caption{Comparison of representative systems by system-level admission capabilities
		(\full~Fully supported~~\partSup~Partially supported~~\none~Not supported)}
	\label{tab:rw_capabilities}
	\resizebox{0.98\textwidth}{!}{
		\begin{tabular}{l|cccccccc|c}
			\toprule
			& \multicolumn{8}{c}{\textbf{Representative Systems}} & \textbf{Ours} \\
			\cmidrule(lr){2-9} \cmidrule(lr){10-10}
			\textbf{System-Level Capability}
			& \cite{hyperledger_indy_docs}
			& \cite{sovrin}
			& \cite{uport_archive}
			& \cite{niya2019trademap}
			& \cite{rathee2022zebra}
			& \cite{pauwels2022zkkyc}
			& \cite{weidentity}
			& \cite{PopaTSC2023Chaindiscipline}
			& \textbf{ZK-AMS}
			\\
			\midrule
			
			Credential-based privacy-preserving admission
			& \partSup & \full & \partSup & \full & \full & \full & \partSup & \partSup & \full \\
			
			Ex-ante uniqueness / Sybil-resistant admission
			& \none & \none & \none & \partSup & \partSup & \partSup & \partSup & \none & \full \\
			
			Permissionless batch admission workflow
			& \none & \none & \none & \none & \none & \none & \none & \none & \full \\
			
			Witness-confidential recursive batching
			& \none & \none & \none & \none & \none & \none & \none & \none & \partSup \\
			
			Constant-cost on-chain settlement per batch
			& \none & \none & \none & \none & \none & \none & \none & \none & \full \\
			
			Anonymous post-admission account provisioning
			& \none & \none & \none & \none & \none & \none & \none & \none & \full \\
			
			\bottomrule
		\end{tabular}
	}
	\vspace{-0.4cm}
\end{table*}

ZK-AMS lies at the intersection of three research lines: identity admission and provisioning for decentralized platforms, privacy-preserving credential verification and on-chain settlement, and scalable proof aggregation for high-concurrency verification. Existing works cover important parts of this design space, but they typically do not provide credibly anonymous admission, recursively aggregated batching with explicitly scoped witness-confidentiality guarantees, permissionless batch finalization, and anonymous account provisioning in one system.

\subsection{Web 3.0 Admission and Identity Provisioning}
Identity is increasingly treated as a platform primitive in decentralized systems, especially in DAO governance, decentralized social systems, and incentive-driven Web 3.0 applications. Existing approaches include SSI/DID/VC ecosystems such as Hyperledger Indy, Sovrin, and uPort~\cite{hyperledger_indy_docs,sovrin,uport_archive}, as well as blockchain-based identity-management and decentralized authorization frameworks~\cite{PopaTSC2023Chaindiscipline,weidentity}. These systems provide useful support for identity issuance, selective disclosure, and decentralized authorization, but they generally stop at identity management, credential presentation, or policy enforcement rather than the full onboarding path from real-world uniqueness credential to anonymous-but-enforceable platform account provisioning.

\subsection{Privacy-Preserving Credential Verification and On-Chain Settlement}
A second line of work studies privacy-preserving verification of credentials and identity attributes in blockchain environments, including KYC-oriented zk systems, anonymous credential schemes, and threshold credential mechanisms~\cite{niya2019trademap,rathee2022zebra,pauwels2022zkkyc,ShiTSC2023ThresholdABC}. These works improve privacy compared with direct disclosure, but many still rely on per-user verification or trusted coordination, so settlement cost grows with the number of joiners. ZK-AMS instead targets batch-settled platform admission with predictable settlement cost under high-concurrency onboarding.

\subsection{Recursive Aggregation and Confidential Batching}
A third line of work uses proof aggregation and recursive composition to improve blockchain scalability~\cite{ShashidharaSecurityPrivacy2024ZKP,LiTITS2023AggregatedZKPPlatooning}. These techniques reduce on-chain verification overhead by compressing multiple statements or execution steps into a single proof, but generic aggregation mechanisms do not by themselves solve the admission problem considered here. In particular, they usually do not address permissionless batching with explicitly stated witness-confidentiality scope or anonymous account provisioning after admission settlement. ZK-AMS therefore treats recursive aggregation as part of a complete Web 3.0 admission workflow rather than as a standalone proof-compression primitive.

Table~\ref{tab:rw_capabilities} summarizes representative systems from the perspective of system-level admission capabilities. Compared with prior works, ZK-AMS combines credential-based anonymous admission, ex-ante uniqueness enforcement, recursively aggregated batching with explicitly scoped witness-confidentiality guarantees, constant-cost batch settlement, and anonymous post-admission provisioning. For the batching-confidentiality row, ZK-AMS is marked as partial support because the current prototype offers this property only under the explicit assumptions in Section~\ref{securityAnalysis}. This combination is the main distinction between ZK-AMS and existing identity, credential, or proof-aggregation systems.

%=====================================================================
\section{Conclusion}\label{concl}
%=====================================================================

This paper presented ZK-AMS, a credibly anonymous admission infrastructure for Web 3.0 platforms that combines zero-knowledge credential validation, confidential off-chain batch aggregation under explicit prototype assumptions, recursive proof aggregation, and anonymous Soul-Account provisioning. Our results show that recursive aggregation shifts settlement from per-user verification to approximately constant batch-level verification, yielding predictable on-chain cost while exposing the current bottlenecks in confidential-folding latency, the scoped confidentiality margin of the present share-release design, per-user ring-signature provisioning cost, and incentive-compatible handling of PBS misbehavior or share withholding.

Future work will focus on reducing latency through pipelined batching and parallel preprocessing, strengthening permissionless finalization with verifiable share handling and better incentives, supporting credential revocation or periodic re-admission, and exploring account recovery for lost Soul-Account keys.

\appendices
\input{appendix_arxiv}

\ifCLASSOPTIONcaptionsoff
  \newpage
\fi

\bibliographystyle{IEEEtran}
\bibliography{IEEEabrv, my_ref_database.bib}

\end{document}

%% file: fig7_constraints.tikz
\begin{tikzpicture}
\begin{axis}[
  zkamsAxis,
  width=\columnwidth,
  height=0.62\columnwidth,
  xlabel={PHC Batch Size},
  ylabel={Constraints (K)},
  xmin=400, xmax=1100,
  ymin=0, ymax=16000,
  xtick={400,500,600,700,800,900,1000,1100},
  legend style={at={(0.02,0.98)}, anchor=north west},
]
  \addplot[
    zkBlueLine,
    dashed,
    mark=o,
    mark options={fill=white, draw=zkBlue, solid},
  ] coordinates {
    (400,5160.400)
    (500,6450.500)
    (600,7740.600)
    (700,9030.700)
    (800,10320.800)
    (900,11610.900)
    (1000,12901.000)
    (1100,14191.100)
  };
  \addlegendentry{Baseline}

  \addplot[
    zkRedLine,
    solid,
    mark=square*,
    mark options={fill=white, draw=zkRed},
  ] coordinates {
    (400,9894.404)
    (500,9894.404)
    (600,9894.404)
    (700,9894.404)
    (800,9894.404)
    (900,9894.404)
    (1000,9894.404)
    (1100,9894.404)
  };
  \addlegendentry{ZK-AMS}
\end{axis}
\end{tikzpicture}

%% file: fig8_proving_time.tikz
\begin{tikzpicture}
\begin{axis}[
  zkamsAxis,
  width=\columnwidth,
  height=0.62\columnwidth,
  xlabel={PHC Batch Size},
  ylabel={ZKP Generation Time (s)},
  xmin=400, xmax=1100,
  ymin=0, ymax=450,
  xtick={400,500,600,700,800,900,1000,1100},
  legend style={at={(0.02,0.98)}, anchor=north west},
]
  \addplot[
    zkBlueLine,
    dashed,
    mark=o,
    mark options={fill=white, draw=zkBlue, solid},
  ] coordinates {
    (400,69.267395278)
    (500,88.235643014)
    (600,104.084401347)
    (700,141.251110014)
    (800,168.689267858)
    (900,211.093685742)
    (1000,259.940030025)
    (1100,320.571285583)
  };
  \addlegendentry{Baseline}

  \addplot[
    zkRedLine,
    solid,
    mark=square*,
    mark options={fill=white, draw=zkRed},
  ] coordinates {
    (400,142.497709046)
    (500,148.959201018)
    (600,140.689481364)
    (700,150.686470009)
    (800,147.549456837)
    (900,138.532148072)
    (1000,142.808929368)
    (1100,149.706087919)
  };
  \addlegendentry{ZK-AMS}
\end{axis}
\end{tikzpicture}

%% file: fig9_zkp_verify_gas.tikz
\begin{tikzpicture}
\begin{axis}[
  zkamsAxis,
  width=0.98\columnwidth,
  height=0.62\columnwidth,
  xlabel={PHC Batch Size},
  ylabel={ZKP Verification Cost (K gas)},
  xmin=0.8, xmax=6.2,
  ymin=0, ymax=12000,
  ytick={0,2000,4000,6000,8000,10000,12000},
  xtick={1,2,3,4,5,6},
  xticklabels={1,4,16,64,128,256},
  xticklabel style={rotate=30, anchor=east},
  legend style={at={(0.02,0.98)}, anchor=north west},
]
  \addplot[
    zkBlueLine,
    dashed,
    mark=o,
    mark options={fill=white, draw=zkBlue, solid},
  ] coordinates {
    (1,250.438)
    (1.5,293.356)
    (2,379.288)
    (2.5,550.996)
    (3,894.544)
    (3.5,1581.568)
    (4,2955.712)
    (5,5704.084)
    (6,11200.852)
  };
  \addlegendentry{Baseline}

  \addplot[
    zkRedLine,
    solid,
    mark=square*,
    mark options={fill=white, draw=zkRed,solid},
  ] coordinates {
    (1.5,791.042)
    (2,791.846)
    (2.5,791.846)
    (3,791.834)
    (3.5,791.822)
    (4,791.810)
    (5,791.822)
    (6,791.822)
  };
  \addlegendentry{ZK-AMS}
\end{axis}
\end{tikzpicture}

%% file: fig10_ringsig.tikz
\begin{tikzpicture}
  % Left y-axis: signing time (ms).
  \begin{axis}[
    zkamsAxis,
    name=zkamsRingAxis,
    scale only axis,
    % Keep the full tikzpicture within a single IEEE column (avoid overfull hbox).
    width=0.70\columnwidth,
    height=0.48\columnwidth,
    xlabel={Ring Size ($L$)},
    ylabel={Signing Time (ms)},
    xmode=log,
    log basis x=2,
    xmin=1, xmax=256,
    ymin=0, ymax=4000,
    ytick={0,1000,2000,3000,4000},
    xtick={1,2,4,8,16,32,64,128,256},
    legend style={at={(0.02,0.98)}, anchor=north west},
  ]
    \addplot[
      zkBlueLine,
      dashed,
      mark=o,
      mark options={fill=white, draw=zkBlue, solid},
    ] coordinates {
      (1,19.190317)
      (2,32.361154)
      (4,60.670213)
      (8,112.705363)
      (16,220.338425)
      (32,429.678908)
      (64,854.006967)
      (128,1701.921100)
      (256,3425.108692)
    };
    \addlegendentry{MLSAGS signing time}

    % Legend entry for the right-axis series (drawn in the overlay axis below).
    \addlegendimage{
      zkRedLine,
      solid,
      mark=square*,
      mark options={fill=white, draw=zkRed},
    }
    \addlegendentry{MLSAGS verification gas}
  \end{axis}

  % Right y-axis: verification gas (K gas).
  \begin{axis}[
    zkamsAxis,
    scale only axis,
    at={(zkamsRingAxis.south west)},
    anchor=south west,
    width=0.70\columnwidth,
    height=0.48\columnwidth,
    axis x line=none,
    axis y line*=right,
    axis line style={draw=black!70, -{Stealth[length=3pt,width=3pt]}},
    axis background/.style={fill=none},
    grid=none,
    xtick=\empty,
    xmode=log,
    log basis x=2,
    xmin=1, xmax=256,
    ymin=0, ymax=24000,
    ytick={0,6000,12000,18000,24000},
    yticklabels={0,6k,12k,18k,24k},
    yticklabel style={xshift=-0.6em},
    ylabel={verification Gas (K)},
    ylabel style={xshift=-0.9em},
  ]
    \addplot[
      zkRedLine,
      solid,
      mark=square*,
      mark options={fill=white, draw=zkRed},
    ] coordinates {
      (1,98.210600)
      (2,162.489600)
      (4,291.519800)
      (8,550.788400)
      (16,1067.471600)
      (32,2099.674400)
      (64,4171.840200)
      (128,8348.845600)
      (256,16666.245400)
    };
  \end{axis}
\end{tikzpicture}

%% file: appendix_arxiv.tex
\renewcommand{\thesubsection}{\thesection.\arabic{subsection}}
\renewcommand{\theequation}{\thesection.\arabic{equation}}
\makeatletter
\@addtoreset{equation}{section}
\makeatother

\section{Zero-Knowledge Proofs and Nova Recursive Arguments}\label{supp:nova}

% ====== (BEGIN) Copied from the main paper Appendix app:nova ======
We utilize zkSNARK \cite{zksnark} to enable efficient verification of computational integrity. A zkSNARK scheme for an arithmetic circuit relation $\mathcal{R}_C = \{(x, W) : C(x, W) = 0\}$ consists of three algorithms:
\begin{itemize}
    \item $(\mathsf{pk, vk}) \leftarrow \mathsf{zkSNARK.KEYGEN}(1^\lambda, C)$: Generates a proving key $\mathsf{pk}$ and verification key $\mathsf{vk}$ given security parameter $\lambda$ and circuit $C$.
    \item $\pi \leftarrow \mathsf{zkSNARK.PROVE}(\mathsf{pk}, x, W)$: Outputs a succinct proof $\pi$ attesting that the prover knows a witness $W$ satisfying $C(x, W)=0$.
    \item $b \leftarrow \mathsf{zkSNARK.VERIFY}(\mathsf{vk}, x, \pi)$: Outputs $b=1$ (accept) if $\pi$ is valid for instance $x$, otherwise $b=0$.
\end{itemize}
We specifically adopt Groth16 \cite{groth2016size} for its constant-size proofs and efficient on-chain verification process. However, relying on individual proofs implies that the number of proofs and the associated on-chain verification costs grow linearly with the frequency of user interactions. To alleviate this burden on the blockchain, we employ Nova \cite{nova}, a recursive argument system based on folding schemes, to aggregate multiple execution steps into a single verifiable instance, thereby enabling the verification of a batch of operations simultaneously.

Nova transforms the standard Rank-1 Constraint System (R1CS) into a \emph{relaxed} R1CS and further into a \emph{committed relaxed} R1CS to support efficient folding with zero-knowledge.
Let $A,B,C \in \mathbb{F}^{m \times \ell}$ be fixed constraint matrices.
A relaxed instance is parameterized by a slack scalar $u \in \mathbb{F}$ and an error vector $E \in \mathbb{F}^m$, together with a witness-dependent vector $Z = (W, x, u),$ where $W$ denotes the private witness and $x$ denotes the public input. The relaxed R1CS constraint is:
\begin{equation}\label{rR1CS}
(A \cdot Z)\circ (B \cdot Z) = u \cdot (C \cdot Z) + E
\end{equation}
where $\circ$ denotes the Hadamard (entry-wise) product.

Through a Non-Interactive Folding Scheme (NIFS), a Nova prover folds two relaxed instances (and their witnesses) into a single relaxed instance using a Fiat--Shamir challenge $r \in \mathbb{F}$.
Let $Z_1=(W_1,x_1,u_1)$ and $Z_2=(W_2,x_2,u_2)$ be the corresponding vectors, and define the cross-term (as in Nova's folding construction)
$$
T = (A Z_1)\circ (B Z_2) + (A Z_2)\circ (B Z_1) - u_1 \cdot (C Z_2) - u_2 \cdot (C Z_1).
$$
Then the folded slack and error are updated as
\begin{equation}\label{nova-fold-update}
\begin{aligned}
u &= u_1 + r \cdot u_2, \\
E &= E_1 + r \cdot T + r^2 \cdot E_2.
\end{aligned}
\end{equation}
By iteratively applying this folding mechanism---merging the accumulated instance representing all prior steps with the new instance of the current step---Nova achieves Incrementally Verifiable Computation (IVC), effectively compressing a potentially unbounded sequence of steps into a single verifiable instance.

To ensure zero-knowledge, Nova commits to witness-dependent components using an additively-homomorphic commitment scheme (Pedersen is a standard instantiation in Nova). In our system, to enable homomorphic evaluation under BGV MKHE, we instantiate the commitment scheme with an additively-homomorphic \emph{linear commitment} over the BGV plaintext ring, which is introduced in Section~2 of the main paper. In the committed relaxed R1CS formulation, the \emph{public instance} contains the commitments and public inputs, while the \emph{witness} contains the openings. Concretely, we denote a (public) committed relaxed instance by $\mathcal{I} \triangleq (\overline{E}, u, \overline{W}, x),$ and the corresponding witness by $\mathcal{W} \triangleq (E, r_E, W, r_W),$ where $\overline{E}$ (resp.\ $\overline{W}$) is a commitment to $E$ (resp.\ $W$) with opening randomness $r_E$ (resp.\ $r_W$).

The NIFS protocol comprises the following algorithms which we utilize in our system design:
\begin{itemize}
    \item $\mathsf{pp} \gets \mathsf{NIFS.ParaGen}(1^\lambda)$: Generates public parameters $\mathsf{pp}$.
    \item $(\mathsf{pk}, \mathsf{vk}) \gets \mathsf{NIFS.KeyGen}(\mathsf{pp}, (A,B,C))$: Generates keys for the constraint system $(A,B,C)$.
    \item $(\mathcal{I}, \mathcal{W}, \overline{T}) \gets \mathsf{NIFS.Process}(\mathsf{pk}, (\mathcal{I}_1,\mathcal{W}_1), (\mathcal{I}_2,\mathcal{W}_2))$: Folds two instance-witness pairs $(\mathcal{I}_1, \mathcal{W}_1),~(\mathcal{I}_2, \mathcal{W}_2)$ into a single pair $(\mathcal{I}, \mathcal{W})$ and outputs a cross-term commitment $\overline{T}$ (a commitment to $T$ under $\mathsf{Com}$).
    \item $\mathcal{I} \gets \mathsf{NIFS.Verify}(\mathsf{vk}, \mathcal{I}_1, \mathcal{I}_2, \overline{T})$: Deterministically computes the folded instance from public inputs and $\overline{T}$ to check consistency.
\end{itemize}
Both the underlying zkSNARK and Nova's folding scheme satisfy standard completeness, soundness, and zero-knowledge properties~\cite{nova}, ensuring that valid proofs can be generated if and only if a corresponding witness exists, without revealing the witness itself.
% ====== (END) Copied from the main paper Appendix app:nova ======

\section{Multi-Key Homomorphic Encryption}\label{supp:mkhe}

% ====== (BEGIN) Copied from the main paper Appendix app:mkhe ======
Multi-Key Fully Homomorphic Encryption (MKHE) allows computation over data encrypted under different keys, ensuring that intermediate values and final results remain encrypted. A key feature is collaborative decryption: no single party can access the plaintext. In the current ZK-AMS prototype, recovering the folded batch result requires aggregating partial decryption shares from all secret keys involved in the finalized computation; the system does not instantiate a separate tunable $t$-out-of-$N$ threshold-decryption layer.

The MKHE scheme consists of the following algorithms:
\begin{itemize}
    \item $\mathsf{pp} \leftarrow \mathsf{MKHE.ParamGen}(1^\lambda, 1^K)$: Parameter generation. Generates public parameters $\mathsf{pp}$ including any public evaluation material needed for homomorphic multiplication, given the security parameter $\lambda$ and the maximum number of parties $K$.

    \item $(\mathsf{pk}_i, \mathsf{sk}_i) \leftarrow \mathsf{MKHE.KeyGen}(\mathsf{pp})$: Key generation. Outputs a public key $\mathsf{pk}_i$ and a secret key $\mathsf{sk}_i$ for each party $i \in \{1, \dots, K\}$.

    \item $c_i \leftarrow \mathsf{MKHE.Encrypt}(\mathsf{pk}_i, \mu)$: Encryption. Encrypts a message $\mu$ under the public key $\mathsf{pk}_i$ for a party $i \in \{1, \dots, K\}$ to output a ciphertext vector $c_i$.

    \item $\hat{c}_S \leftarrow \mathsf{MKHE.Add}(c_1, c_2)$ (denoted as $c_1 \oplus c_2$): Homomorphic addition. Given two ciphertext vectors $c_1, c_2$, outputs an evaluated ciphertext $\hat{c}_S$ that decrypts to the sum of the underlying plaintexts.

    \item $\hat{c}_S \leftarrow \mathsf{MKHE.Multiply}(\mathbf{M}, c)$ (denoted as $\mathbf{M} \cdot c$): Homomorphic matrix multiplication. Given a plaintext matrix $\mathbf{M}$ and a ciphertext vector $c$, outputs an evaluated ciphertext $\hat{c}_S$ that decrypts to the product of $\mathbf{M}$ and the underlying plaintext vector.

    \item $\hat{c}_S \leftarrow \mathsf{MKHE.Multiply}(a, c)$ (denoted as $a \odot c$): Homomorphic scalar multiplication. Given a scalar $a$ and a ciphertext vector $c$, outputs an evaluated ciphertext $\hat{c}_S$ that decrypts to the product of $a$ and the underlying plaintext of $c$.

    \item $\hat{c}_S \leftarrow \mathsf{MKHE.HMul}(c_1, c_2)$ (denoted as $c_1 \otimes c_2$): Ciphertext-ciphertext (entry-wise) multiplication. Outputs an evaluated ciphertext vector encrypting the Hadamard product of the underlying plaintext vectors.

    \item $p_i \leftarrow \mathsf{MKHE.PartialDecrypt}(\mathsf{sk}_i, \hat{c}_S)$: Partial decryption. Outputs a partial decryption share $p_i$ using the secret key $\mathsf{sk}_i$ for the evaluated ciphertext $\hat{c}_S$, where $i$ is in the set of involved parties $S$.

    \item $\mu \leftarrow \mathsf{MKHE.Combine}(\{p_i\}_{i \in S}, \hat{c}_S)$: Decryption combination. Reconstructs the plaintext message $\mu$ by aggregating the partial shares $\{p_i\}_{i \in S}$ from all parties in the set $S$ associated with $\hat{c}_S$.
\end{itemize}

In ZK-AMS, we leverage a BGV-style multikey homomorphic encryption (MKHE) scheme \cite{LopezAlt2012MultikeyFHEMPC, BrakerskiGentryVaikuntanathan2012} to secure the witness folding process in Nova. Users (acting as parties in the BGV MKHE scheme) encrypt their private witnesses, enabling a distributed aggregation process to homomorphically fold them according to Nova constraints. This design eliminates the need for a centralized aggregation node with access to plaintext witnesses during folding; the PBS only reconstructs a randomized folded witness required to generate the final zkSNARK proof.

This all-party share-release design is sufficient for the current prototype, but it also explains two limitations highlighted in the main manuscript. First, confidentiality is not claimed against a coalition consisting of the PBS and all but one participant in a finalized batch, because after share release that remaining participant's folded plaintext contribution to the revealed accumulator may be isolated by subtracting the colluders' known contributions. This is weaker than direct recovery of the participant's entire original witness, but it still falls outside the confidentiality claim made in the paper. Second, malformed or withheld decryption shares can stall batch finalization; thus the present prototype relies on off-chain timeout and re-queue policies rather than publicly verifiable share validation or identifiable-abort mechanisms~\cite{StadlerPVSS1996,IshaiOstrovskyZikas2014IdentifiableAbort}. These limitations do not affect the correctness of already confirmed on-chain state, but they remain important deployment constraints of the current MKHE-backed batching layer.
% ====== (END) Copied from the main paper Appendix app:mkhe ======

\section{Linkable Ring Signatures}\label{supp:lrs}

% ====== (BEGIN) Copied from the main paper Appendix app:lrs ======
Linkable Ring Signatures (LRS) \cite{rivest2001leak, beullens2020calamari} allow a signer to sign a message on behalf of a group anonymously while enabling the detection of signatures produced by the same signer. An LRS scheme consists of four algorithms:
\begin{itemize}
\item $(\mathsf{pk, sk}) \leftarrow \mathsf{LRS.KeyGen}(1^k)$: Generates a public-private key pair given security parameter $k$.
\item $\sigma \leftarrow \mathsf{LRS.Sign}(1^k, 1^n, m, L, \mathsf{sk})$: Outputs a signature $\sigma$ for message $m$ using secret key $\mathsf{sk}$ and a ring of $n$ public keys $L$ (where the signer's $\mathsf{pk} \in L$).
\item $b \leftarrow \mathsf{LRS.Verify}(1^k, 1^n, m, L, \sigma)$: Outputs $b=1$ (accept) if $\sigma$ is a valid signature on $m$ with respect to ring $L$, otherwise $b=0$.
\item $b \leftarrow \mathsf{LRS.Link}(1^k, 1^n, m_1, m_2, \sigma_1, \sigma_2, L_1, L_2)$: Outputs $b=1$ if $\sigma_1$ and $\sigma_2$ were generated by the same signer (derived from the same $\mathsf{sk}$), otherwise $b=0$.
\end{itemize}
The scheme satisfies correctness, unforgeability, signer ambiguity, and linkability \cite{noether2016ring}. In ZK-AMS, we adopt the Multilayered Linkable Spontaneous Anonymous Group Signature (MLSAGS) \cite{noether2016ring}, which utilizes a unique key image $y_0$ to enforce linkability. The key image is computed as: $y_0 = \mathsf{sk} \cdot H_p(\mathsf{pk})$, where $H_p$ is a deterministic hash function mapping to an elliptic curve point. Since $y_0$ is mathematically bound to the signer's key pair $(\mathsf{sk}, \mathsf{pk})$, it serves as a unique tag to prevent double-signaling (e.g., preventing one user from binding multiple accounts) without revealing the user's identity.
% ====== (END) Copied from the main paper Appendix app:lrs ======

\section{Extended Security Proof Sketches}\label{supp:security}

This section complements the system-security analysis in Section~V of the main manuscript. Because ZK-AMS is a principled composition of existing primitives rather than a new primitive, the goal here is not to re-prove the adopted zkSNARK, Nova, MLSAGS, or MKHE constructions from first principles. Instead, we make the reduction logic behind the main manuscript's six propositions more explicit by documenting which primitive-level security notion or deployment assumption each system-level guarantee ultimately relies on.

\subsection{Assumption-to-guarantee map}

For quick reference, the guarantees G1--G6 in the main manuscript rest on the following assumption paths.
\begin{itemize}
    \item \textbf{G1 (Admission integrity):} Assumptions A1--A3; reduction target is issuer/holder signature unforgeability or zkSNARK soundness.
    \item \textbf{G2 (Duplicate-admission resistance):} Assumptions A1--A3; reduction target is collision resistance of $\mathsf{hash_{PHC}}$, zkSNARK soundness, or violation of issuer-side uniqueness in A1.
    \item \textbf{G3 (Credibly anonymous provisioning):} Assumptions A2--A3; reduction target is MLSAGS correctness, unforgeability, or anonymity.
    \item \textbf{G4 (Single-account enforcement):} Assumption A2; reduction target is MLSAGS linkability.
    \item \textbf{G5 (Witness confidentiality during batching):} Assumptions A2 and A5; reduction target is BGV-style MKHE semantic security together with the non-trivial coalition bound stated in A5.
    \item \textbf{G6 (Batch consistency and settlement integrity):} Assumptions A2--A3; reduction target is Nova/NIFS transcript consistency in the Fiat--Shamir random-oracle setting or zkSNARK soundness.
\end{itemize}

These mappings should be read as \emph{system-level composition claims}. They show that any successful attack against a claimed ZK-AMS property would imply either a break of a named primitive or a failure of an explicitly stated deployment assumption.

\subsection{Per-goal proof sketches}

\paragraph*{G1: Admission integrity.}
Suppose an adversary causes the Verifier Contract to accept a batch containing an admission for which it does not know a valid PHC and corresponding holder key. The on-chain verifier accepts only if the submitted zkSNARK proof verifies with respect to the committed admission relation. Hence either the adversary has produced a false accepting proof for a statement outside the relation, which contradicts zkSNARK soundness, or the accepted statement itself passes the embedded credential checks. In the latter case, the adversary must have fabricated a valid issuer-authenticated PHC or a valid holder-binding statement without possessing the proper signing material, which contradicts the unforgeability of the underlying issuer or holder signature component. Assumption A1 then rules out the degenerate case in which the credential is syntactically valid but no longer represents a legitimate personhood binding.

\paragraph*{G2: Duplicate-admission resistance.}
The contract records $\mathsf{hash_{PHC}}$ values as on-chain admission anchors and atomically rejects repeated anchors during settlement. Therefore, an accepted duplicate admission cannot arise from simply replaying the same public anchor after confirmation. A successful bypass must instead fall into one of three categories. First, two distinct PHCs may collide under $\mathsf{hash_{PHC}}$, contradicting collision resistance. Second, the proof may falsely certify that the batch relation is satisfied even though the duplicate check should fail, contradicting zkSNARK soundness. Third, the issuer may have violated the uniqueness trust root in A1 by issuing multiple simultaneously valid PHCs for the same real-world person. Thus the duplicate-admission guarantee is exactly as strong as those underlying primitives and trust assumptions.

\paragraph*{G3: Credibly anonymous provisioning.}
The Soul Registry accepts a provisioning request only if the submitted MLSAGS signature verifies against a ring of admitted seed public keys. Consequently, if an adversary causes acceptance without any admitted signer, it must have forged a valid MLSAGS signature or broken its correctness guarantees. Conversely, conditioned on the existence of some admitted signer, the identity of that signer is hidden among the ring members up to the anonymity bound of MLSAGS. Any non-negligible advantage in identifying which admitted seed key actually signed can therefore be transferred into an adversary against MLSAGS anonymity. This is why ZK-AMS can simultaneously claim public verifiability of authorization and signer anonymity.

\paragraph*{G4: Single-account enforcement.}
In ZK-AMS, successful provisioning writes the MLSAGS key image $y_0$ to the Soul Registry, and future provisioning attempts that reuse the same image are rejected. For a fixed seed signing secret key, MLSAGS linkability requires all valid signatures under that key to induce the same key image. Therefore, if an adversary manages to derive two distinct accepted Soul-Account bindings from the same admitted seed identity without triggering key-image reuse, the adversary would violate MLSAGS linkability. The system-level one-to-one provisioning guarantee is thus inherited directly from the primitive-level uniqueness of the MLSAGS key image.

\paragraph*{G5: Witness confidentiality during batching.}
The confidentiality argument has two layers. Before decryption-share release, the PBS and any colluding users see only MKHE ciphertexts, transcript metadata, and public commitments. A standard hybrid argument can then replace the encrypted witnesses of the non-colluding users with encryptions of dummy values, one by one, without changing the adversary's view except with negligible advantage; otherwise one obtains a distinguisher against BGV-style MKHE semantic security. After share release, the coalition also learns the final folded plaintext accumulator. At this point the protection no longer comes from ciphertext indistinguishability alone, but from the coalition bound in A5: because at least two participants remain outside the PBS coalition, subtracting the colluders' known contributions reveals only a public linear combination of at least two unknown witness-dependent terms, not an isolated term attributable to any one remaining participant. The current prototype therefore does \emph{not} claim confidentiality against a coalition consisting of the PBS and all but one participant, because that excluded case may permit isolation of the last participant's folded plaintext contribution after decryption-share release. This is weaker than direct recovery of the participant's entire original witness, but it is already outside the confidentiality claim proved here.

\paragraph*{G6: Batch consistency and settlement integrity.}
Each folding step derives its Fiat--Shamir challenge from the public transcript. Therefore, if an adversary tampers with the commitments, ordering, or batch metadata while attempting to preserve the same folded state, it must either find a transcript inconsistency that survives the Nova/NIFS verification logic or create a false accepting final proof. More concretely, modifying any transcript-dependent element changes the challenge used to derive the next folded instance, unless the adversary breaks the random-oracle-based consistency assumptions behind the Fiat--Shamirized Nova/NIFS transcript. If the adversary instead submits an inconsistent accumulated state together with a proof that still passes contract verification, then zkSNARK soundness is violated. Hence transcript tampering that evades settlement checks reduces to one of these two primitive failures.

Taken together, these extended sketches explain the intended meaning of the six propositions in the main manuscript: they are not standalone primitive-security theorems, but explicit composition claims showing how ZK-AMS inherits its guarantees from the cited building blocks plus the stated deployment assumptions.